\def\dsqcup{\sqcup\mathchoice{\mkern-7mu}{\mkern-7mu}{\mkern-3.2mu}{\mkern-3.8mu}\sqcup}
\begin{document}
\title{Generalized Parikh Matrices For Tracking Subsequence Occurrences}
%
%
\author{Szil\'ard Zsolt Fazekas \and
Huang Xinhao}
\authorrunning{S.Z. Fazekas \and H. Xinhao}
%
\institute{Akita University, Akita, Japan\\
\email{szilard.fazekas@ie.akita-u.ac.jp\quad\quad d8523004@s.akita-u.ac.jp}
 }
\maketitle              
%

\begin{abstract}

We introduce and study a generalized Parikh matrix mapping based on tracking the occurrence counts of special types of subsequences. These matrices retain more information about a word than the original Parikh matrix mapping while preserving the homomorphic property. We build the generalization by first introducing the Parikh factor matrix mapping and extend it to the Parikh sequence matrix mapping. We establish an interesting connection between the generalized Parikh matrices and the original ones and use it to prove that certain important minors of a Parikh sequence matrix have nonnegative determinant. Finally, we generalize the concept of subword histories and show that each generalized subword history is equivalent to a linear one.

\keywords{Combinatorics on Words\and Parikh Matrix \and Scattered Subsequences \and Subword History}
\end{abstract}

\section{Introduction}
The study of scattered subwords in combinatorics on words has been a topic of interest for nearly a century. Scattered subwords, also known as scattered factors, subsequences, or simply subwords, are defined as follows: for words $u$ and $v$, $u$ is a scattered subword of $v$ if there exist words $x_1, \ldots, x_n$ and $y_0, \ldots, y_n$, some of which may be empty, such that $u = x_1 \cdots x_n$ and $v = y_0 x_1 y_1 \cdots x_n y_n$. For instance, if $u = ab$ and $v = acbab$, then $u$ is a scattered subword of $v$. 

A fundamental result in this area is Higman's theorem (1952), which states that any subset of words over a finite alphabet that are not pairwise comparable for the subword ordering is finite, thus establishing that subword ordering is a well partial ordering. Higman's theorem has been rediscovered multiple times, often also credited to Haines (1969). Since then, the concept of subwords has garnered increasing attention from researchers.

One notable recent application of the theory of subwords is in the study of downward closures. The downward closure of a language is the set of all scattered subwords of its members. Abstracting languages by their downward closure as proxy can make equivalence and inclusion decidable/efficient when direct comparisons are undecidable or computationally hard~\cite{Zetzsche16,Zetzsche18}.

Another topic that garnered a lot of interest in the last few years is the algorithmics of the so called Simon congruence, that compares the set of all subwords of given words. A recent breakthrough result is that this congruence is efficiently decidable~\cite{GawryKKMS21}, which was a starting point to investigating the question in the context of pattern matching~\cite{KIM2024114478,FleishmannSKMNSW23}, and subword analysis problems~\cite{SungminHSS23}.

Our research focuses on the direction that began with the Parikh vector mapping, introduced by R. J. Parikh~\cite{Parikh66}, which has become a significant concept in the theory of formal languages. The Parikh vector of a word records the number of occurrences of each letter in the word. Although mapping a word to a vector in this way loses much information, there have been several refinements of the mapping to address that, perhaps most notably, by Mateescu et al., who introduced the so called Parikh matrices~\cite{MateescuSSY01}. Parikh matrices count the occurrences of scattered subwords instead of letters, with the Parikh vector appearing on the second diagonal of the matrix. These matrices have proven to be a powerful tool in language theory and combinatorics and generated a vast literature.

A Parikh vector usually represents multiple words, leading to an ambiguous representation. Salomaa investigated conditions for uniquely determining a word from its Parikh matrix~\cite{Salomaa05}. Although Parikh matrices can reduce the ambiguity inherent in the Parikh type of mappings, they cannot eliminate it entirely. Extensive work has been done on sharpening these mappings. The $q$-Parikh matrices, introduced in \cite{RenardRW202401,RenardRW202402}, extend the definition of binomial coefficients to q-deformations, recording occurrences and positions of each letter for richer information. Serbanuta considered the original Parikh matrices, which only track letters following an ordered alphabet, and proposed an extended Parikh matrix mapping where the matrix rank is determined by the word length rather than the alphabet size \cite{Serbanuta04}. Dick et al. introduced paired mappings by so called $\mathbb{P}$-Parikh matrices and $\mathbb{L}$-Parikh matrices~\cite{DickHMR20}.

Since the Parikh matrix mapping is not injective, Atanasiu examined properties of binary words with the same Parikh matrix, referring to these words as amiable \cite{AtanasiuAP07}. He characterized conditions under which equivalence classes of amiable words contain multiple elements and extended some of the results to larger alphabets \cite{AtanasiuAP08}. A long awaited recent development~\cite{HahnCH23} gave conditions over ternary alphabets for Words to be amiable, using a finite set of rewriting rules to obtain equivalence classes.

In this paper, we extend Parikh matrices to be able track subsequences in which we can require that some of the letters are matched to consecutive positions. This is a significantly stronger formalism than simple subwords, although not quite as flexible as others, such as subsequences with gap constraints. However, based on our results it seems that expanding further in that direction may be possible by minor adaptations of our results. In Section 3, we first develop simple matrices that count the number of occurrences of factors. Building on that, in Section 4 we introduce Parikh sequence matrices. We show that both mappings are homomorphisms and examine the determinants of certain minors of special interest. We then define generalized subword histories and we show how to reduce generalized to linear forms, laying the foundations of generalizing a vast collection of results about Parikh matrices to these new, sharper mappings.

\section{Preliminaries}
We will denote by $\mathbb{N}$ the set of natural numbers starting from $1$ and $\mathbb{N}_0=\mathbb{N}\cup \{0\}$. For all $i,j\in \mathbb{N}_0$ with $i\leq j$, we denote the closed interval between $i$ and $j$ by $[i,j]=\{i,\dots,j\}$. For succinctness we also use $[n]=[0,n]$.
An ordered alphabet is an alphabet $\Sigma=\{a_1,a_2,...,a_k\}$ with a linear order $"<"$ on it. If we have $a_1<a_2<...<a_k$, then we use the notation $\Sigma=\{a_1<a_2<...<a_k\}$. 

For a word $u=a_1\cdots a_n$, where $a_i\in \Sigma$ for all $1\leq i\leq n$, the length of $u$ is denoted $|u|$ and it is equal to $n$. The word of length $0$, having no letters, is called the \emph{empty word} and is denoted by $\varepsilon$.
Let $u,v$ be words over an alphabet $\Sigma$. The word $u$ is a \textit{factor} of $v$ if there exists words $s,t\in \Sigma^*$ such that $v= sut$.
The word $u$ is a \textit{(scattered) subword} of $v$ if there exists letters $a_1,...,a_n$ and words $t_0,...,t_n\in \Sigma^*$ such that $u=a_1\cdots a_n$ and $v=t_0a_1t_1 \cdots a_nt_n$. Now let $u=a_1\cdots a_n$. For some $i\in [1,n]$, the $i$th letter of $u$ is denoted by $u[i]$, that is $u[i]=a_i$. For all $i,j\in [1,n]$ with $i\leq j$, the factor of $u$ between positions $i$ and $j$ is denoted $u[i,j]=a_i\cdots a_j$. An \emph{occurrence as a factor} of $u$ in $v$ is a position $i\in [1,|v|]$ such that $v[i,i+|u|-1]=u$. The number of different occurrences of $u$ in $v$ as a subword is denoted $|v|_u$. An \textit{occurrence as a subword} of $u$ in $v$ is a tuple $(i_1,\dots,i_n)$ of increasing elements of $[1,|v|]$ such that $v[i_j]=a_j$ for all $j\in [1,|u|]$. We denote the number of distinct occurrences of a nonempty word $u$ as a scattered subword in $v$ by writing ${v\choose u}$. For example, ${aab\choose a}=2$, ${aaabb\choose ab}=6$. To simplify arguments, sometimes we will refer to an occurrence and mean the actual positions where the letters of the factor/subword is matched in the containing word.

The starting point of our study are the so called Parikh matrices, introduced by Mateescu et al., which are square matrices where the entries track subword occurrences, and are named after Parikh, because they represent a sharpening of the well-known Parikh vector mapping from words over $\Sigma$ to $\mathbb{N}_0^{|\Sigma|}$. In what follows, $\mathcal{M}_{n}$ denotes the monoid of $n\times n$ integer matrices with multiplication.

\begin{definition}[Parikh matrix mapping~\cite{MateescuSSY01}]
    Let $\Sigma=\{a_1<a_2<...<a_k\}$ be an arbitrary alphabet, where $k\geq 1$. The Parikh matrix mapping, denoted $$\Psi_{M_k}:\Sigma^*\to \mathcal{M}_{k+1},$$
    defined by the condition: if $\Psi_{M_k}(a_q)=(m_{i,j})_{1 \leq i,j \leq (k+1)}$, then for each $1\leq i \leq (k+1)$, $m_{i,i}=1$, $m_{q,q+1}=1$, all other elements of the matrix $\Psi_{M_k}(a_q)$ being 0.
\end{definition}

Many general properties of the Parikh matrix mapping were proved in the same paper they were introduced. The most fundamental is the following.

\begin{theorem}[\cite{MateescuSSY01}, Theorem 2.1]
    Let $\Sigma=\{a_1<a_2<...<a_k\}$ be an ordered alphabet, where $k\geq 1$, and assume that $w\in \Sigma^*$. The matrix $\Psi_{M_k}(w)=(m_{i,j})_{1\leq i,j\leq (k+1)}$, has the following properties:
    \begin{enumerate}
    \item  
    $m_{i,j}=0$, for all $1\leq j<i\leq (k+1)$; 
    \item  
    $m_{i,i}=1$, for all $1\leq i\leq (k+1)$; 
    \item  
    $m_{i,j+1}={w \choose a_{i}\cdots a_j}$, for all $1\leq i\leq j\leq k$.
\end{enumerate}
\end{theorem}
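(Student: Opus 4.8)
The plan is to induct on the length of $w$, exploiting the fact that $\Psi_{M_k}$ is a monoid homomorphism (which is immediate from the definition: a word is a product of its letters, and $\Psi_{M_k}$ is defined on letters and extended multiplicatively). The base case $w=\varepsilon$ gives the identity matrix $I_{k+1}$, for which properties (1)--(3) hold trivially: the strict-lower-triangular entries are $0$, the diagonal entries are $1$, and for $1\le i\le j\le k$ we have $\binom{\varepsilon}{a_i\cdots a_j}=0$ since $a_i\cdots a_j$ is nonempty, matching $m_{i,j+1}=0$.

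For the inductive step, write $w = w' a_q$ for some letter $a_q\in\Sigma$ and $w'\in\Sigma^*$, and set $\Psi_{M_k}(w') = (m'_{i,j})$, which by the induction hypothesis is upper triangular with unit diagonal and $m'_{i,j+1} = \binom{w'}{a_i\cdots a_j}$. Since $\Psi_{M_k}(w) = \Psi_{M_k}(w')\,\Psi_{M_k}(a_q)$ and $\Psi_{M_k}(a_q) = I_{k+1} + E_{q,q+1}$ (where $E_{q,q+1}$ is the matrix unit with a single $1$ in position $(q,q+1)$), right-multiplication by $\Psi_{M_k}(a_q)$ leaves every column fixed except column $q+1$, to which it adds column $q$ of $\Psi_{M_k}(w')$. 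Concretely, $m_{i,j} = m'_{i,j}$ for $j\neq q+1$, and $m_{i,q+1} = m'_{i,q+1} + m'_{i,q}$. Properties (1) and (2) are then immediate: the only entry that changes lies in column $q+1$ at rows $i\le q$ (since $m'_{i,q}=0$ for $i>q$ by upper-triangularity), so no zero below the diagonal is disturbed and the diagonal entry $m_{q+1,q+1}=m'_{q+1,q+1}=1$ is untouched because $m'_{q+1,q}=0$.

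For property (3), fix $1\le i\le j\le k$. If $j\neq q$, then $m_{i,j+1} = m'_{i,j+1} = \binom{w'}{a_i\cdots a_j} = \binom{w' a_q}{a_i\cdots a_j}$, where the last equality holds because appending the letter $a_q$ (with $q\ne j$, so $a_q\ne a_j$) creates no new occurrence of a subword ending in $a_j$. If $j=q$, then $m_{i,j+1} = m_{i,q+1} = m'_{i,q+1} + m'_{i,q} = \binom{w'}{a_i\cdots a_q} + \binom{w'}{a_i\cdots a_{q-1}}$; here the convention when $i=q$ is that $\binom{w'}{a_i\cdots a_{q-1}}=\binom{w'}{\varepsilon}=1$, matching the fact that $m'_{q,q}=1$. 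This is exactly the standard recurrence for counting occurrences of the scattered subword $a_i\cdots a_q$ in $w'a_q$: every occurrence either lies entirely within $w'$, or uses the final letter $a_q$ as the image of the last symbol and embeds the prefix $a_i\cdots a_{q-1}$ into $w'$. Hence $m_{i,j+1} = \binom{w' a_q}{a_i\cdots a_q}$, completing the induction.

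The only genuinely content-bearing step is the subword-counting recurrence in the case $j=q$; everything else is bookkeeping about how right-multiplication by $I_{k+1}+E_{q,q+1}$ acts on columns. I would take care to state the boundary convention $\binom{u}{\varepsilon}=1$ explicitly so that the case $i=q$ (subword of length one) is handled uniformly rather than as a separate subcase.
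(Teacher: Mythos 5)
Your proof is correct: the induction on $|w|$, the observation that right-multiplication by $\Psi_{M_k}(a_q)=I+E_{q,q+1}$ adds column $q$ to column $q+1$, and the subword recurrence ${w'a_q\choose a_i\cdots a_q}={w'\choose a_i\cdots a_q}+{w'\choose a_i\cdots a_{q-1}}$ (with the convention ${w'\choose \varepsilon}=1$) are exactly the content needed. The paper itself does not prove this statement — it is quoted as background from the cited reference — and your argument is essentially the standard inductive proof given there, so there is nothing further to reconcile.
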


From here we get that the Parikh matrix mapping is a homomorphism and the entries of the matrix track the number of occurrences of factors of $a_1\cdots a_n$ as subwords of the argument of the mapping, as the following example illustrates.
\begin{example}
    Consider the word $w=abcb$ over alphabet $\Sigma=\{a<b<c\}$. Since the Parikh matrix mapping is a homomorphism, we can write
    \begin{alignat}{2}
    \Psi_{M_k}(w)&=\Psi_{M_k}(a)\Psi_{M_k}(b)\Psi_{M_k}(c)\Psi_{M_k}(b) \notag \\
    &=\begin{bmatrix}
        1 & 1 & 0 & 0\\
        0 & 1 & 0 & 0\\
        0 & 0 & 1 & 0\\
        0 & 0 & 0 & 1 
        \end{bmatrix} \begin{bmatrix}
        1 & 0 & 0 & 0\\
        0 & 1 & 1 & 0\\
        0 & 0 & 1 & 0\\
        0 & 0 & 0 & 1 
        \end{bmatrix} \begin{bmatrix}
        1 & 0 & 0 & 0\\
        0 & 1 & 0 & 0\\
        0 & 0 & 1 & 1\\
        0 & 0 & 0 & 1 
        \end{bmatrix} \begin{bmatrix}
        1 & 0 & 0 & 0\\
        0 & 1 & 1 & 0\\
        0 & 0 & 1 & 0\\
        0 & 0 & 0 & 1 
        \end{bmatrix}
        \notag\\
        &=\begin{bmatrix}
        1 & 1 & 2 & 1\\
        0 & 1 & 2 & 1\\
        0 & 0 & 1 & 1\\
        0 & 0 & 0 & 1 
        \end{bmatrix} 
        =\begin{bmatrix}
            1 & {w\choose a} & {w\choose ab} & {w\choose abc}\\[0.2cm]
            0 & 1 & {w\choose b} & {w\choose bc}\\[0.2cm]
            0 & 0 & 1 & {w\choose c}\\[0.2cm]
            0 & 0 & 0 & 1 
        \end{bmatrix}\notag
    \end{alignat}
\end{example}

The first extension of the Parikh matrix mapping relaxed the condition on the kind of subwords that can be tracked in these matrices~\cite{Serbanuta04}. The formalism is almost identical, but for the extended Parikh matrix mapping induced by an \textit{arbitrary} word $v=b_1\cdots b_k$, the entries above the main diagonal become $m_{i,j+1}={w\choose b_i\cdots b_j}$ for all $1\leq i\leq j\leq k$. The original Parikh matrix is the special case when $v$ is just the concatenation of the letters of the alphabet.

\section{Parikh factor matrices}
Parikh matrices only count the occurrences of \textit{scattered subwords}. In order to generalize this mapping for tracking more complex subsequences, as a first step we introduce Parikh-style matrices capable of counting occurrences of \textit{factors}.

In general, to be able to construct a homomorphic mapping that computes the number of occurrences as factors of some word $u$ in a word $v_1v_2$, trivially the mapping must preserve information about $|v_1|_u$, $|v_2|_u$, but also about suffixes of $v_1$ that are prefixes of $u$ and prefixes of $v_2$ that are suffixes of $u$, since $|v_1v_2|_u=|v_1|_u+|v_2|_u+ D$, where $D$ is the number of decompositions $u=u'u''$ such that $u'$ is a nonempty suffix of $v_1$ and $u''$ is a nonempty prefix of $v_2$. To illustrate the idea on a simple case, we begin by constructing the matrices to track the number of occurrences of $ab$ as a factor. In this case the necessary information about each word is the number of $ab$ factors in it, whether they finish with $a$ and whether they start with $b$. We introduce the following mappings, where $u,v,w\in \Sigma^*$:
\begin{align}
s^w_u &=\begin{cases}
  1 & \text{ if } w \text{ starts with } u\text{ } (w=uv)\\
  0 & \text{ otherwise } \end{cases}\notag\qquad
c^w_v =\begin{cases}
  1 & \text{ if } w = v\\
  0 & \text{ otherwise } \end{cases} 
  \notag\\
e^w_v &=\begin{cases}
  1 & \text{ if } w \text{ ends with } v\text{ } (w=uv)\\
  0 & \text{ otherwise } \end{cases}\notag
 \end{align} \\[0.4cm]
and for a more uniform (and more compact) notation use $f^w_u$ to indicate the number of factors $u$ in word $w$, instead of the previously introduced and more usual $|w|_u$. To be able to track the necessary values through matrix multiplication, the simplest way to arrange them in a matrix is
$$\Phi_M(w) =\begin{bmatrix}
 1\quad & e^w_a\quad & f^w_{ab}\\
 0\quad & c^w_\varepsilon\quad & s^w_b\\
 0\quad & 0\quad & 1
\end{bmatrix}$$ and for any two binary words $w_1,w_2$, it is straightforward that
$$\Phi_M(w_1)\Phi_M(w_2) =\begin{bmatrix}
 1 & e^{w_1}_a & f^{w_1}_{ab}\\
 0 & c^{w_1}_\varepsilon & s^{w_1}_b\\
 0 & 0 & 1
\end{bmatrix}\begin{bmatrix}
 1 & e^{w_2}_a & f^{w_2}_{ab}\\
 0 & c^{w_2}_\varepsilon & s^{w_2}_b\\
 0 & 0 & 1
\end{bmatrix}$$
$$
=\begin{bmatrix}
 1\quad & e^{w_2}_a+e^{w_1}_a\cdot c^{w_2}_\varepsilon\quad & f^{w_1}_{ab}+f^{w_2}_{ab}+e^{w_1}_a\cdot s^{w_2}_b\\
 0\quad & c^{w_1}_\varepsilon\cdot c^{w_2}_\varepsilon\quad & s^{w_1}_b+c^{w_1}_\varepsilon\cdot s^{w_2}_b\\
 0\quad & 0\quad & 1
\end{bmatrix}=\Phi_M(w_1w_2).
$$

Following the idea above we can generalize the matrix so that it could record the number of longer factors in a word, as follows.

\begin{definition}[Parikh factor matrix mapping]
    Let $\Sigma=\{a_1<\cdots< a_k\}$ be an alphabet and consider the concatenation of all the letters $\sigma=a_1a_2...a_k$, where $k\geq 1$. The Parikh factor matrix mapping, denoted $$\Phi_{M_\sigma}:\Sigma^*\to \mathcal{M}_{3(k-1)},$$
    is defined for any $w\in\Sigma^*$ as
    $$\Phi_{M_\sigma}(w) =\begin{bmatrix}
        I & E^w & F^w\\
        O & C^w & S^w\\
        O & O & I \end{bmatrix},$$
        where $I$ is the identity matrix, $O$ is the zero matrix and
$$F^w=\begin{bmatrix}
  f^w_{a_1a_2} & f^w_{a_1a_2a_3} & \dots  & f^w_{a_1...a_k}\\
  0      & f^w_{a_2a_3}    & \dots  & f^w_{a_2...a_k}\\
  \vdots & \vdots    & \ddots & \vdots \\
  0      & 0         & \dots  & f^w_{a_{k-1}a_k}
\end{bmatrix},$$ 
is a matrix tracking occurrences of factor of $a_1\dots a_k$ as factors of $w$. Finally, $S^w$ and $E^w$ include the necessary information about starting factors (prefixes) and end factors (suffixes):
$$S^w=\begin{bmatrix}
  s^w_{a_2}    & s^w_{a_2a_3} & \dots  & s^w_{a_2...a_k}\\
  0      & s^w_{a_3}    & \dots  & s^w_{a_3...a_k}\\
  \vdots & \vdots & \ddots & \vdots \\
  0      & 0      & \dots  & s^w_{a_k}
\end{bmatrix},\;
E^w=\begin{bmatrix}
  e^w_{a_1}    & e^w_{a_1a_2} & \dots  & e^w_{a_1...a_{k-1}}\\
  0      & e^w_{a_2}    & \dots  & e^w_{a_2...a_{k-1}}\\
  \vdots & \vdots & \ddots & \vdots \\
  0      & 0      & \dots  & e^w_{a_{k-1}}
\end{bmatrix},$$
while matrix $C$ makes sure that the mapping works correctly for short words (shorter than $k$), too,
$$C^w=\begin{bmatrix}
  c^w_\varepsilon & c^w_{a_2} & \dots  & c^w_{a_2...a_{k-1}}\\
  0      & c^w_\varepsilon    & \dots  & c^w_{a_3...a_{k-1}}\\
  \vdots & \vdots & \ddots & \vdots \\
  0      & 0      & \dots  & c^w_\varepsilon
\end{bmatrix}.$$
\end{definition}

Next we argue that the mapping indeed works for tracking the occurrences when concatenating the containing words, that is, the mapping is homomorphic.

\begin{theorem}
    The Parikh factor matrix mapping $\Phi_{M_\sigma}:\Sigma^*\to \mathcal{M}_{3(k-1)}$ is a homomorphism.
\end{theorem}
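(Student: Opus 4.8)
The plan is to verify the homomorphic identity $\Phi_{M_\sigma}(w_1)\Phi_{M_\sigma}(w_2)=\Phi_{M_\sigma}(w_1w_2)$ by checking it block by block, using the $3\times 3$ block structure of the matrices. Writing $\Phi_{M_\sigma}(w_i)$ with blocks $I,E^{w_i},F^{w_i},O,C^{w_i},S^{w_i},O,O,I$, the product has upper-triangular block form with new blocks $E^{w_1}+E^{w_2}$ (wait — actually $E^{w_1}C^{w_2}+E^{w_2}$), $C^{w_1}C^{w_2}$, $S^{w_1}+C^{w_1}S^{w_2}$ in the appropriate positions, and the $(1,3)$ block equal to $F^{w_1}+E^{w_1}S^{w_2}+F^{w_2}$. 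So the theorem reduces to four combinatorial identities on the scalar entries:
\begin{itemize}
\item[(C)] $c^{w_1w_2}_{a_i\cdots a_j}=\sum_{\ell} c^{w_1}_{a_i\cdots a_\ell}c^{w_2}_{a_{\ell+1}\cdots a_j}$, i.e. $w_1w_2$ equals a given short word iff it splits accordingly — immediate from uniqueness of factorization;
\item[(E)] $e^{w_1w_2}_{a_i\cdots a_j}=e^{w_2}_{a_i\cdots a_j}\cdot[\,\dots\,]+\sum_\ell e^{w_1}_{a_i\cdots a_\ell}c^{w_2}_{a_{\ell+1}\cdots a_j}$: a suffix of $w_1w_2$ either lies entirely in $w_2$, or consumes all of $w_2$ (which must then be a short exact word) and ends $w_1$;
\item[(S)] symmetric to (E), for prefixes;
\item[(F)] $f^{w_1w_2}_{a_i\cdots a_j}=f^{w_1}_{a_i\cdots a_j}+f^{w_2}_{a_i\cdots a_j}+\sum_{\ell} e^{w_1}_{a_i\cdots a_\ell}\, s^{w_2}_{a_{\ell+1}\cdots a_j}$: an occurrence of the factor $a_i\cdots a_j$ in $w_1w_2$ either stays in $w_1$, stays in $w_2$, or straddles the boundary, and straddling occurrences are in bijection with pairs (nonempty suffix of $w_1$, complementary nonempty prefix of $w_2$) that concatenate to $a_i\cdots a_j$. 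This is exactly the decomposition reasoning already spelled out in the text for the $ab$ case.
\end{itemize}

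First I would fix the index conventions so that the $(p,q)$ entry of $F^w$ is $f^w_{a_p\cdots a_{q+1}}$ (and analogously for $E^w,S^w,C^w$ with their shifted ranges), and then write out the $(1,3)$, $(1,2)$, $(2,3)$, $(2,2)$ blocks of the product in terms of these entries; the $(1,1)$, $(2,1)$, $(3,1)$, $(3,2)$, $(3,3)$ blocks are trivially correct since the off-diagonal-of-blocks lower part is zero and the diagonal blocks $I$ multiply correctly. The bulk of the proof is then just proving the four identities (C), (E), (S), (F) above, each by the straightforward "where does the witnessing occurrence sit relative to the cut between $w_1$ and $w_2$" case analysis, taking care that the boundary terms involve $C^{w_2}$ (resp. $C^{w_1}$) precisely because an occurrence can only overflow past the end of $w_2$ if $w_2$ is short enough to be swallowed whole, which is what the $c$-entries detect.

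The main obstacle — really the only place where care is needed rather than routine bookkeeping — is the handling of short words, i.e. getting the $C$-block interactions exactly right in identities (E) and (S) (and in the $E^{w_1}S^{w_2}$ cross term of (F)). When $w_1$ or $w_2$ is shorter than the factor being tracked, an occurrence of that factor can span the entire short word and extend into the neighbour, so the naive "suffix of $w_1$ meets prefix of $w_2$" bijection must be replaced by a chain through the $c$-entries; the block matrices were designed with the $C$-block precisely to absorb this, so the proof amounts to checking that the matrix product reproduces this chain faithfully. I expect the verification to be mechanical once the indexing is pinned down, so after establishing the four identities I would simply assemble them into the block equation and conclude $\Phi_{M_\sigma}(w_1)\Phi_{M_\sigma}(w_2)=\Phi_{M_\sigma}(w_1w_2)$, hence $\Phi_{M_\sigma}$ is a homomorphism.
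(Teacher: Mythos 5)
Your proposal is correct and follows essentially the same route as the paper: compute the block product, identify the blocks of $\Phi_{M_\sigma}(w_1)\Phi_{M_\sigma}(w_2)$ as $E^{w_2}+E^{w_1}C^{w_2}$, $C^{w_1}C^{w_2}$, $S^{w_1}+C^{w_1}S^{w_2}$ and $F^{w_1}+F^{w_2}+E^{w_1}S^{w_2}$, and verify entrywise against $\Phi_{M_\sigma}(w_1w_2)$ via the cut-position case analysis for $c$, $e$, $s$, $f$. The only difference is presentational: the paper states these entrywise equalities as following directly from the definitions, while you spell out the underlying occurrence/decomposition argument.
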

\begin{proof}
    For all words $w_1, w_2\in\Sigma^*$, $\sigma=a_1a_2...a_k$, where $k\geq 1$ and $a_1,a_2,...,a_k\in \Sigma$, the following holds.

$$\begin{adjustbox}{max width=\linewidth}
$\Phi_{M_\sigma}(w_1)\Phi_{M_\sigma}(w_2)=\begin{bmatrix}
  I & E^{w_1} & F^{w_1}\\
  O & C^{w_1} & S^{w_1}\\
  O & O & I
\end{bmatrix}
\begin{bmatrix}
  I & E^{w_2} & F^{w_2}\\
  O & C^{w_2} & S^{w_2}\\
  O & O & I
\end{bmatrix}=\begin{bmatrix}
  I\quad & E^{w_2}+E^{w_1}\cdot C^{w_2}\quad & F^{w_1}+F^{w_2}+E^{w_1}\cdot S^{w_2}\\
  O\quad & C^{w_1}\cdot C^{w_2}\quad & S^{w_1}+C^{w_1}\cdot S^{w_2}\\
  O\quad & O\quad & I
\end{bmatrix}$\end{adjustbox}$$

$$\Phi_{M_\sigma}(w_1w_2)=\begin{bmatrix}
  I\quad & E^{w_1w_2}\quad & F^{w_1w_2}\\
  O\quad & C^{w_1w_2}\quad & S^{w_1w_2}\\
  O\quad & O\quad & I
\end{bmatrix}$$
To see the validity of the statement, we compare the matrices component-wise, with the equalities directly following from the definitions of $s$, $c$, $e$ and $f$.
\begin{alignat}{2}
 E^{w_1w_2}&=\begin{bmatrix}
  e^{w_1w_2}_{a_1}    & e^{w_1w_2}_{a_1a_2} & \dots  & e^{w_1w_2}_{a_1...a_{k-1}}\\
  0      & e^{w_1w_2}_{a_2}    & \dots  & e^{w_1w_2}_{a_2...a_{k-1}}\\
  \vdots & \vdots & \ddots & \vdots \\
  0      & 0      & \dots  & e^{w_1w_2}_{a_{k-1}}
\end{bmatrix}\notag\\
&=\begin{bmatrix}
  e^{w_2}_{a_1}    & e^{w_2}_{a_1a_2} & \dots  & e^{w_2}_{a_1...a_{k-1}}\\
  0      & e^{w_2}_{a_2}    & \dots  & e^{w_2}_{a_2...a_{k-1}}\\
  \vdots & \vdots & \ddots & \vdots \\
  0      & 0      & \dots  & e^{w_2}_{a_{k-1}}
\end{bmatrix}\notag\\
&+\begin{bmatrix}
  e^{w_1}_{a_1}c^{w_2}_\varepsilon    & e^{w_1}_{a_1}c^{w_2}_{a_2}+e^{w_1}_{a_1a_2}c^{w_2}_\varepsilon & \dots  & e^{w_1}_{a_1}c^{w_2}_{a_2...a_{k-1}}+...+e^{w_1}_{a_1...a_{k-1}}c^{w_2}_\varepsilon\\
  0      & e^{w_1}_{a_2}c^{w_2}_\varepsilon    & \dots  & e^{w_1}_{a_2}c^{w_2}_{a_3...a_{k-1}}+...+e^{w_1}_{a_2...a_{k-1}}c^{w_2}_\varepsilon\\
  \vdots & \vdots & \ddots & \vdots \\
  0      & 0      & \dots  & e^{w_1}_{a_{k-1}}c^{w_2}_\varepsilon
\end{bmatrix} \notag\\
&=E^{w_2}+E^{w_1}\cdot C^{w_2} \notag
\end{alignat}
\begin{alignat}{2}
 S^{w_1w_2}&=\begin{bmatrix}
  s^{w_1w_2}_{a_2}    & s^{w_1w_2}_{a_2a_3} & \dots  & s^{w_1w_2}_{a_2...a_k}\\
  0      & s^{w_1w_2}_{a_3}    & \dots  & s^{w_1w_2}_{a_3...a_k}\\
  \vdots & \vdots & \ddots & \vdots \\
  0      & 0      & \dots  & s^{w_1w_2}_{a_k}
\end{bmatrix}\notag\\
&=\begin{bmatrix}
  s^{w_1}_{a_2}    & s^{w_1}_{a_2a_3} & \dots  & s^{w_1}_{a_2...a_k}\\
  0      & s^{w_1}_{a_3}    & \dots  & s^{w_1}_{a_3...a_k}\\
  \vdots & \vdots & \ddots & \vdots \\
  0      & 0      & \dots  & s^{w_1}_{a_k}
\end{bmatrix}\notag\\
&+\begin{bmatrix}
  c^{w_1}_\varepsilon s^{w_2}_{a_2}   & c^{w_1}_\varepsilon s^{w_2}_{a_2a_3}+c^{w_1}_{a_2}s^{w_2}_{a_3} & \dots  & c^{w_1}_\varepsilon s^{w_2}_{a_2...a_k}+...+c^{w_1}_{a_2...a_{k-1}}s^{w_2}_{a_k}\\
  0      & c^{w_1}_\varepsilon s^{w_2}_{a_3}+c^{w_1}_{a_2}s^{w_2}_{a_3} & \dots  & c^{w_1}_\varepsilon s^{w_2}_{a_3...a_k}+...+c^{w_1}_{a_3...a_{k-1}}s^{w_2}_{a_k}\\
  \vdots & \vdots & \ddots & \vdots \\
  0      & 0      & \dots  & c^{w_1}_\varepsilon s^{w_2}_{a_k}
\end{bmatrix} \notag\\
&=S^{w_1}+C^{w_1}\cdot S^{w_2}  \notag
\end{alignat}
\begin{alignat}{2}
 C^{w_1w_2}&=\begin{bmatrix}
  c^{w_1w_2}_\varepsilon & c^{w_1w_2}_{a_2} & \dots  & c^{w_1w_2}_{a_2...a_{k-1}}\\
  0      & c^{w_1w_2}_\varepsilon    & \dots  & c^{w_1w_2}_{a_3...a_{k-1}}\\
  \vdots & \vdots & \ddots & \vdots \\
  0      & 0      & \dots  & c^{w_1w_2}_\varepsilon
\end{bmatrix}\notag\\
&=\begin{bmatrix}
  c^{w_1}_\varepsilon & c^{w_1}_{a_2} & \dots  & c^{w_1}_{a_2...a_{k-1}}\\
  0      & c^{w_1}_\varepsilon    & \dots  & c^{w_1}_{a_3...a_{k-1}}\\
  \vdots & \vdots & \ddots & \vdots \\
  0      & 0      & \dots  & c^{w_1}_\varepsilon
\end{bmatrix}
\cdot\begin{bmatrix}
  c^{w_2}_\varepsilon & c^{w_2}_{a_2} & \dots  & c^{w_2}_{a_2...a_{k-1}}\\
  0      & c^{w_2}_\varepsilon    & \dots  & c^{w_2}_{a_3...a_{k-1}}\\
  \vdots & \vdots & \ddots & \vdots \\
  0      & 0      & \dots  & c^{w_2}_\varepsilon
\end{bmatrix} \notag\\
&=C^{w_1}\cdot C^{w_2} \notag
\end{alignat}

For $F^{w_1w_2}$, we get

$F^{w_1w_2}=\begin{bmatrix}
  f^{w_1w_2}_{a_1a_2} & f^{w_1w_2}_{a_1a_2a_3} & \dots  & f^{w_1w_2}_{a_1...a_k}\\
  0      & f^{w_1w_2}_{a_2a_3}    & \dots  & f^{w_1w_2}_{a_2...a_k}\\
  \vdots & \vdots    & \ddots & \vdots \\
  0      & 0         & \dots  & f^{w_1w_2}_{a_{k-1}a_k}
\end{bmatrix}$

\begin{adjustbox}{max width=\linewidth}
$=\begin{bmatrix}
  f^{w_1}_{a_1a_2}+f^{w_2}_{a_1a_2}+e^{w_1}_{a_1}\cdot s^{w_2}_{a_2} & f^{w_1}_{a_1a_2a_3}+f^{w_2}_{a_1a_2a_3}+e^{w_1}_{a_1}\cdot s^{w_2}_{a_2a_3}+e^{w_1}_{a_1a_2}\cdot s^{w_2}_{a_3} & \dots  & f^{w_1}_{a_1...a_k}+f^{w_2}_{a_1...a_k}+e^{w_1}_{a_1}\cdot s^{w_2}_{a_2a_3...a_k}+...+e^{w_1}_{a_1...a_{k-1}}\cdot s^{w_2}_{a_k}\\
  0      & f^{w_1}_{a_2a_3}+f^{w_2}_{a_2a_3}+e^{w_1}_{a_2}\cdot s^{w_2}_{a_3}    & \dots  & f^{w_1}_{a_2...a_k}+f^{w_2}_{a_2...a_k}+e^{w_1}_{a_2}\cdot s^{w_2}_{a_3...a_k}+...+e^{w_1}_{a_2...a_{k-1}}\cdot s^{w_3}_{a_k}\\
  \vdots & \vdots    & \ddots & \vdots \\
  0      & 0         & \dots  & f^{w_1}_{a_{k-1}a_k}+f^{w_2}_{a_{k-1}a_k}+e^{w_1}_{a_{k-1}}\cdot s^{w_2}_{a_k}
\end{bmatrix}$
\end{adjustbox}
$$\begin{adjustbox}{max width=\linewidth}
$=\begin{bmatrix}
 f^{w_1}_{a_1a_2} & f^{w_1}_{a_1a_2a_3} & \dots & f^{w_1}_{a_1...a_k} \\
 0 & f^{w_1}_{a_2a_3} & \dots & f^{w_1}_{a_2...a_k} \\
 \vdots & \vdots    & \ddots & \vdots \\
   0      & 0         & \dots  & f^{w_1}_{a_{k-1}a_k}
\end{bmatrix}+$
$\begin{bmatrix}
  f^{w_2}_{a_1a_2} & f^{w_2}_{a_1a_2a_3} & \dots  & f^{w_2}_{a_1...a_k}\\
  0 & f^{w_2}_{a_2a_3} & \dots & f^{w_2}_{a_2...a_k}\\
  \vdots & \vdots    & \ddots & \vdots \\
  0      & 0         & \dots  & f^{w_2}_{a_{k-1}a_k}
\end{bmatrix}+$
$\begin{bmatrix}
  e^{w_1}_{a_1} & e^{w_1}_{a_1a_2} & \dots  & e^{w_1}_{a_1...a_{k-1}}\\
  0      & e^{w_1}_{a_2}    & \dots  & e^{w_1}_{a_2...a_{k-1}}\\
  \vdots & \vdots    & \ddots & \vdots \\
  0      & 0         & \dots  & e^{w_1}_{a_{k-1}}
\end{bmatrix}\cdot$
$\begin{bmatrix}
  s^{w_2}_{a_2} & s^{w_2}_{a_2a_3} & \dots  & s^{w_2}_{a_2a_3...a_k}\\
  0  & s^{w_2}_{a_3} & \dots & s^{w_2}_{a_3...a_k}\\
  \vdots & \vdots    & \ddots & \vdots \\
  0      & 0         & \dots  & s^{w_2}_{a_k}
\end{bmatrix}$
\end{adjustbox}$$
$=F^{w_1}+F^{w_2}+E^{w_1}\cdot S^{w_2} $\\
So, $\Phi_{M_\sigma}(w_1)\Phi_{M_\sigma}(w_2)=\Phi_{M_\sigma}(w_1w_2)$, for all words $w_1,w_2\in\Sigma^*$, and the Parikh factor matrix mapping is a morphism.

\qed\end{proof}

\begin{example}
Let $\Sigma=\{a,b,c\}$, $w_1,w_2\in \Sigma^*$, $w_1=b,w_2=cabc$, $\sigma=abc$,
$$\begin{array}{l} 
    \Phi_{M_\sigma}(w_1)\Phi_{M_\sigma}(w_2)=\begin{bmatrix}
      I & E^{w_1} & F^{w_1}\\
      O & C^{w_1} & S^{w_1}\\
      O & O & I
    \end{bmatrix}\begin{bmatrix}
      I & E^{w_2} & F^{w_2}\\
      O & C^{w_2} & S^{w_2}\\
      O & O & I
    \end{bmatrix} \\
    
    =\begin{bmatrix}
      I & \begin{bmatrix}
      e^{w_1}_a & e^{w_1}_{ab}   \\
      0      & e^{w_1}_b      \\
    \end{bmatrix} & \begin{bmatrix}
      f^{w_1}_{ab} & f^{w_1}_{abc}  \\
      0      & f^{w_1}_{bc}     \\
    \end{bmatrix}\\
      O & \begin{bmatrix}
     c^{w_1}_\varepsilon & c^{w_1}_b  \\
     0 & c^{w_1}_\varepsilon  \\
     \end{bmatrix} & \begin{bmatrix}
      s^{w_1}_b & s^{w_1}_{bc} \\
      0      & s^{w_1}_c     \\
    \end{bmatrix}\\
      O & O & I
    \end{bmatrix}
    \begin{bmatrix}
      I & \begin{bmatrix}
      e^{w_2}_a & e^{w_2}_{ab}  \\
      0      & e^{w_2}_b    \\
    \end{bmatrix} & \begin{bmatrix}
      f^{w_2}_{ab} & f^{w_2}_{abc}   \\
      0      & f^{w_2}_{bc}      \\
    \end{bmatrix}\\
      O & \begin{bmatrix}
     c^{w_2}_\varepsilon & c^{w_2}_b\\
     0 & c^{w_2}_\varepsilon \\
     \end{bmatrix} & \begin{bmatrix}
      s^{w_2}_b & s^{w_2}_{bc}   \\
      0      & s^{w_2}_c      \\
    \end{bmatrix}\\
      O & O & I
    \end{bmatrix}
    \\

=\begin{bmatrix}
  I & \begin{bmatrix}
  0 & 0   \\
  0      & 1    \\
\end{bmatrix} & \begin{bmatrix}
  0 & 0 \\
  0      & 0   \\
\end{bmatrix}\\
  O & \begin{bmatrix}
 0 & 1 \\
 0 & 0  \\
 \end{bmatrix} & \begin{bmatrix}
  1 & 0 \\
  0      & 0    \\
\end{bmatrix}\\
  O & O & I
\end{bmatrix}
\begin{bmatrix}
  I & \begin{bmatrix}
  0 & 0  \\
  0      & 0   \\
\end{bmatrix} & \begin{bmatrix}
  1 & 1   \\
  0      & 1     \\
\end{bmatrix}\\
  O & \begin{bmatrix}
 0 & 0\\
 0 & 0 \\
 \end{bmatrix} & \begin{bmatrix}
  0 & 0   \\
  0      & 1     \\
\end{bmatrix}\\
  O & O & I
\end{bmatrix} =\begin{bmatrix}
  I & \begin{bmatrix}
  0 & 0  \\
  0      & 0    \\
\end{bmatrix} & \begin{bmatrix}
 1  & 1   \\
  0      & 2    \\
\end{bmatrix}\\
  O & \begin{bmatrix}
 0 & 0\\
 0 & 0 \\
 \end{bmatrix} & \begin{bmatrix}
  1 & 1   \\
  0      & 0    \\
\end{bmatrix}\\
  O & O & I
\end{bmatrix}
=\Phi_{M_\sigma}(w_1w_2)
\end{array}$$

\end{example}

\section{Parikh sequence matrices}
Now that we have already constructed the Parikh factor matrices and showed that the mapping is a homomorphism, we are ready to generalize the Parikh matrix mapping to track more flexibly defined subsequences. 
We will use the operation $\bullet$ for combining factors $q_1,q_2$ over $\Sigma$ to generalize the notion of subwords to \emph{generalized subsequences}. These are still subwords, but with conditions attached to them on certain letters having to occur right next to each other in the containing word. 

\begin{definition}[Generalized subsequences]\label{def:gs}
    Let $q_1,q_2,...,q_n,v$ be nonempty words over $\Sigma$. The sequence $Q=q_1\bullet q_2\bullet ...\bullet q_n$ is a \emph{generalized subsequence} of $v$ if there exists words $t_0,t_1,...,t_n\in \Sigma^*$ such that $v=t_0 q_1 t_1 ... q_n t_n$. An occurrence of $Q$ in $v$ is a tuple $(i_1,\dots, i_n)$ of increasing positions of $v$ such that for each $j\in [1,n-1]$ we have $i_{j+1}-i_j\geq |q_j|$ and $v[i_j,i_j+|q_j|-1]=q_j$. The number of distinct occurrences of $Q$ in $v$ will be denoted as $|v|_Q$.
\end{definition}

\begin{example}
    The $\bullet$ notation intuitively represents a gap (possibly of length $0$), when matching a generalized subsequence in the word containing it. For instance $|aabb|_{ab\bullet b}=1$, but $|aabb|_{a\bullet bb}=2$. 
\end{example}

Note that when $q_1,\dots,q_n$ are all single letters, then $q_1\bullet \cdots \bullet q_n$ being a generalized subsequence of $v$ is equivalent to saying that $q_1\cdots q_n$ is a subword of $v$, and $|w|_{q_1\bullet \cdots \bullet q_n}={w\choose q_1\cdots q_n}$. For convenience we set $q\bullet \varepsilon=\varepsilon\bullet q=q$ for any $q\in \Sigma^*$ and we adopt the convention that $|w|_\varepsilon=1$ for any $w\in \Sigma^*$.

Such subsequences have been considered in many contexts and under various names, such as gapped sequences~\cite{EdwardsS04,AndreattaN16} in bioinformatics or subsequences with gap constraints~\cite{DayKMS22}, a notion that is even more general than what we suggest here. The number of occurrences of those subsequences can be thought of as lossy representations of the containing word and are natural mathematical models for situations where one has to deal with
input strings with missing or erroneous symbols in DNA sequences or digital signals.
\subsection{Parikh sequence matrices on arbitrary words}

Similarly to the case of Parikh factor matrices, we use the following mappings for arbitrary words $u,v,w\in \Sigma^*$:
\begin{align}
s^w_{u\bullet v} &=\begin{cases}
  f^w_v & {\text{ if } w \text{ starts with } u\text{ or } u \text{ is the empty word.}}\\
  0 & {\text{ otherwise } }\end{cases} \notag \\
e^w_{u\bullet v} &=\begin{cases}
  f^w_u & {\text{ if } w \text{ ends with } v\text{ or }v \text{ is the empty word. }}\\
  0 &{ \text{ otherwise } }\end{cases} \notag \\
  c^w_{u\bullet v} &=\begin{cases}
  1 & \text{ if } w = uxv \text{ for some } x\in\Sigma^*\\
  0 & \text{ otherwise } \end{cases} \notag 
  \end{align}\\
  and write $f^w_{u\bullet v}$ to indicate the number of distinct factors of the form $uxv$ in $w$. For a sequence $Q=q_1\bullet q_2 \bullet \dots \bullet q_n$, we define $|Q|=\sum_{i=1}^n |q_i|$. In what follows, $Q[i:j]$ denotes the subsequence from $i$th letter to $j$th in $Q$, where the $\bullet$ operations do not count as taking up positions, so $ab\bullet cd [2:3]=b\bullet c$. 

\begin{definition}[Parikh sequence matrix mapping]
Let $\Sigma$ be an arbitrary alphabet and a generalized subsequence $Q=q_1\bullet q_2 \bullet \dots \bullet q_x$ where $x\geq 1$. The Parikh sequence matrix mapping, denoted $\Xi_{Q}$, is the morphism:
$$\Xi_{Q}:\Sigma^*\to \mathcal{M}_{3(|Q|-1)},$$
defined by
$$\Xi_{Q}(w)=\begin{bmatrix}            
  I & E^{w} & F^{w}\\
  O & C^{w} & S^{w}\\
  O & O & I
\end{bmatrix}$$

where 
$$E^w=\begin{bmatrix}
 e^{w}_{Q[1]} & \dots & e^{w}_{Q[1:l_1]\bullet} &\dots& e^{w}_{Q[1:l_2]\bullet} &\dots &  e^{w}_{Q[1:l_x-1]}\\
 \vdots & &\vdots & &\vdots & &\vdots\\
 0 &\dots& e^{w}_{Q[l_1] \bullet} &\dots& e^{w}_{Q[l_1:l_2]\bullet} &\dots& e^{w}_{Q[l_1:l_x-1]}\\
 \vdots & &\vdots & &\vdots & &\vdots\\
 0 &\dots& 0 &\dots& 0&\dots& e^{w}_{Q[l_x-2:l_x-1]}
 \end{bmatrix}$$$$
 F^w=\begin{bmatrix}
 f^{w}_{Q[1:2]} &\dots& f^{w}_{Q[1:l_1+1]}&\dots& f^{w}_{Q[1:l_2+1]} &\dots & f^{w}_{Q[1:l_x]} \\
 \vdots & &\vdots & &\vdots & &\vdots\\
 0 &\dots& f^{w}_{Q[l_1:l_1+1]}&\dots& f^{w}_{Q[l_1:l_2+1]} &\dots & f^{w}_{Q[l_1:l_x]} \\
 \vdots & &\vdots & &\vdots & &\vdots\\
 0 &\dots& 0 &\dots& 0&\dots& f^{w}_{Q[l_x-1:l_x]}
 \end{bmatrix}$$
 $$
 C^w=\begin{bmatrix}
 c^{w}_\varepsilon &\dots& s^{w}_{Q[2:l_1]} &\dots& s^{w}_{Q[2:l_2]}&\dots& c^{w}_{Q[2:l_x-1]} \\
 \vdots & &\vdots & &\vdots & &\vdots\\
 0 &\dots&  1 &\dots& e^{w}_{Q[l_1+1:l_2]} &\dots &e^{w}_{Q[l_1+1:l_x-1]}\\
 \vdots & &\vdots & &\vdots & &\vdots\\
 0 &\dots& 0 &\dots& 0 &\dots& c^{w}_\varepsilon 
 \end{bmatrix}$$$$
S^w=\begin{bmatrix}
 s^{w}_{Q[2]} &\dots& s^{w}_{Q[2:l_1+1]} &\dots& s^{w}_{Q[2:l_2+1]} &\dots& s^{w}_{Q[2:l_x]} \\
  \vdots & &\vdots & &\vdots & &\vdots\\
 0 &\dots& s^{w}_{\bullet Q[l_1+1]} &\dots& s^{w}_{\bullet Q[l_1+1:l_2+1]}&\dots&s^{w}_{\bullet Q[l_1+1:l_x]} \\
  \vdots & &\vdots & &\vdots & &\vdots\\
 0 &\dots &0 &\dots&0&\dots&  s^{w}_{Q[l_x]}
 \end{bmatrix}$$
 \end{definition}

\begin{theorem}
    For any generalized subsequence $Q$, the Parikh sequence matrix mapping $\Xi_{Q}:\Sigma^*\to \mathcal{M}_{3(|Q|-1)}$ is a homomorphism.
\end{theorem}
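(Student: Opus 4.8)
The plan is to mimic the proof for Parikh factor matrices: verify the claim by block multiplication and then a component-wise comparison, where each entry identity reduces to a combinatorial statement about how factors of the form $uxv$ in $w_1w_2$ split across the concatenation point. Writing $\Xi_Q(w_1)\Xi_Q(w_2)$ in the natural $3\times 3$ block form, the upper-triangular structure immediately gives that the result again has the shape $\begin{bmatrix} I & E' & F' \\ O & C' & S' \\ O & O & I \end{bmatrix}$ with $E'=E^{w_2}+E^{w_1}C^{w_2}$, $C'=C^{w_1}C^{w_2}$, $S'=S^{w_1}+C^{w_1}S^{w_2}$, and $F'=F^{w_1}+F^{w_2}+E^{w_1}S^{w_2}$. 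So everything comes down to showing these four matrices equal $E^{w_1w_2}$, $C^{w_1w_2}$, $S^{w_1w_2}$, $F^{w_1w_2}$ respectively.

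First I would pin down exactly which generalized subsequences index the rows and columns of each block — this bookkeeping is the part most likely to hide errors. Each entry of $E^w,F^w,S^w,C^w$ is one of $e^w_{\cdot}$, $f^w_{\cdot}$, $s^w_{\cdot}$, $c^w_{\cdot}$ applied to a contiguous piece $Q[i:j]$ of the sequence $Q$, possibly with a trailing or leading $\bullet$. The intended reading is: the $F$-block tracks genuine factor-occurrences $f^w_{Q[i:j]}$ (here the first and last "blocks" $q$-pieces of $Q[i:j]$ must be matched as factors, with $\bullet$-gaps in between treated as arbitrary $x$), the $S$-block tracks occurrences where a prefix $q_1'$ of the first piece has been "consumed" already (so we only need the suffix-part to start the word, hence $s^w_{u\bullet v}$), the $E$-block symmetrically tracks occurrences where a suffix of the last piece is still to come, and the $C$-block handles the degenerate short-word cases plus the mixed prefix/suffix situations. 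I would state a precise indexing lemma making this correspondence explicit, then the four matrix identities become instances of the single recurrence
\[
|w_1w_2|_{Q[i:j]} = |w_1|_{Q[i:j]} + |w_2|_{Q[i:j]} + \sum_{\text{split point } p} (\text{suffix of }Q[i:j]\text{ matched as }E\text{ in }w_1)\cdot(\text{complementary prefix matched as }S\text{ in }w_2),
\]
which is exactly the defining decomposition of factor occurrences of a (generalized) pattern across a concatenation, iterated over where the boundary of $w_1w_2$ falls inside the pattern. The entries of $E^{w_1}S^{w_2}$, when multiplied out, produce precisely one summand for each such split point, because the matrices are indexed by the "cut positions" of $Q$.

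The remaining identities are lighter: $C'=C^{w_1}C^{w_2}$ is the statement that the "containment with prefix $u$ and suffix $v$" predicates compose correctly (a word $w_1w_2$ has shape $uxv$ iff there is a split of the pattern $u\bullet v$ into $u\bullet v' $ matched by $w_1$ and $\varepsilon\bullet v''$ — equivalently a factor — matched by $w_2$, summed over splits), and this is a Boolean analogue of the $F$-identity; $E'$ and $S'$ are the "one-sided" versions where only a prefix (resp. suffix) of the pattern is pending, so they mix an $e$/$s$ term with a $c$-composition term, exactly as in the factor case. I would present the $F$-block identity in full generality (this is the conceptual core) and then remark that the $E$-, $S$-, $C$-identities are obtained by specializing: delete the incoming $F^{w_1},F^{w_2}$ terms and restrict the sum to the appropriate range of split points.

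The main obstacle I anticipate is purely notational: verifying that the row/column indexing of $E^w$ (whose columns run over $Q[1{:}l_1-1]$-type suffixes pending) lines up under matrix multiplication with the row indexing of $S^w$ (whose rows run over $\bullet Q[l_1{+}1{:}\cdot]$-type prefixes pending) so that the product $E^{w_1}S^{w_2}$ really does contribute $e^{w_1}_{Q[i:p]\bullet}\cdot s^{w_2}_{\bullet Q[p+1:j]}$ summed over all internal cut positions $p$, and that the "block boundary" indices $l_1,\dots,l_x$ (the positions in $Q$ where a new $q$-piece starts) are handled correctly — in particular that a cut landing strictly inside some $q_r$ forces that $q_r$ to be split as a true factor on both sides, whereas a cut landing at a $\bullet$ allows the gap to absorb part of both words. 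Once the indexing lemma is stated carefully, each of the four block equalities follows by expanding the matrix product and matching summands, exactly as in the proof of the Parikh factor matrix homomorphism, and I would close by noting $\Xi_Q(w_1)\Xi_Q(w_2)=\Xi_Q(w_1w_2)$ and hence $\Xi_Q(\varepsilon)=I$ as well.
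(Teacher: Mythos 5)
Your proposal is correct and takes essentially the same route as the paper: the paper omits the formal details by appealing to the same block-multiplication argument used for the Parikh factor matrix mapping, which is exactly the component-wise verification you outline, namely $E^{w_1w_2}=E^{w_2}+E^{w_1}C^{w_2}$, $C^{w_1w_2}=C^{w_1}C^{w_2}$, $S^{w_1w_2}=S^{w_1}+C^{w_1}S^{w_2}$ and $F^{w_1w_2}=F^{w_1}+F^{w_2}+E^{w_1}S^{w_2}$, with the cross term $E^{w_1}S^{w_2}$ accounting for occurrences straddling the concatenation point. Your bookkeeping about cut positions (inside a piece $q_r$ versus at a $\bullet$-gap) is precisely the role the $e$, $s$, $c$ mappings play in the paper's factor-matrix proof, so no genuinely different idea is involved.
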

\begin{proof}
The argument is very similar to the case of Parikh matrices or Parikh factor matrices presented earlier, and we omit the formal description due to space constraints.

\qed\end{proof}

\begin{example}

Let $\Sigma=\{a,b,c\}$, $w_1,w_2\in \Sigma^*$, $w_1=babcab,w_2=cbcba,Q=ab\bullet c$.

$$\Xi_{M_{ab\bullet c}}(w_1)\Xi_{M_{ab\bullet c}}(w_2)=\begin{bmatrix}
  I & E^{w_1} & F^{w_1}\\
  O & C^{w_1} & S^{w_1}\\
  O & O & I
\end{bmatrix}\begin{bmatrix}
  I & E^{w_2} & F^{w_2}\\
  O & C^{w_2} & S^{w_2}\\
  O & O & I
\end{bmatrix} $$
$$\begin{adjustbox}{max width=\linewidth}
$
=\begin{bmatrix}
  I & \begin{bmatrix}
  e^{w_1}_a & e^{w_1}_{ab\bullet}   \\
  0      & e^{w_1}_{b\bullet}      \\
\end{bmatrix} & \begin{bmatrix}
  f^{w_1}_{ab} & f^{w_1}_{ab\bullet c}  \\
  0      & f^{w_1}_{b\bullet c}     \\
\end{bmatrix}\\
  O & \begin{bmatrix}
 c^{w_1}_\varepsilon & s^{w_1}_b  \\
 0 & 1 \\
 \end{bmatrix} & \begin{bmatrix}
  s^{w_1}_b & s^{w_1}_{b\bullet c} \\
  0      & s^{w_1}_{\bullet c }    \\
\end{bmatrix}\\
  O & O & I
\end{bmatrix}
\begin{bmatrix}
  I & \begin{bmatrix}
  e^{w_2}_a & e^{w_2}_{ab\bullet}  \\
  0      & e^{w_2}_{b \bullet }  \\
\end{bmatrix} & \begin{bmatrix}
  f^{w_2}_{ab} & f^{w_2}_{ab\bullet c}   \\
  0      & f^{w_2}_{b\bullet c}      \\
\end{bmatrix}\\
  O & \begin{bmatrix}
 c^{w_2}_\varepsilon & s^{w_2}_b\\
 0 & 1\\
 \end{bmatrix} & \begin{bmatrix}
  s^{w_2}_b & s^{w_2}_{b\bullet c}   \\
  0      & s^{w_2}_{\bullet c }     \\
\end{bmatrix}\\
  O & O & I
\end{bmatrix}$
\end{adjustbox}$$
$$
=\begin{bmatrix}
  I & \begin{bmatrix}
  0 & 2   \\
  0      & 3      \\
\end{bmatrix} & \begin{bmatrix}
  2 & 1  \\
  0      & 2    \\
\end{bmatrix}\\
  O & \begin{bmatrix}
 0 & 1  \\
 0 & 1  \\
 \end{bmatrix} & \begin{bmatrix}
  1 & 1 \\
  0      & 1    \\
\end{bmatrix}\\
  O & O & I
\end{bmatrix}
\begin{bmatrix}
  I & \begin{bmatrix}
  1 & 0  \\
  0      & 2    \\
\end{bmatrix} & \begin{bmatrix}
  0 & 0   \\
  0      & 1     \\
\end{bmatrix}\\
  O & \begin{bmatrix}
 0 & 0\\
 0 & 1 \\
 \end{bmatrix} & \begin{bmatrix}
  0 & 0   \\
  0      & 2     \\
\end{bmatrix}\\
  O & O & I
\end{bmatrix}=\begin{bmatrix}
  I & \begin{bmatrix}
  1 & 2  \\
  0      & 5    \\
\end{bmatrix} & \begin{bmatrix}
  2 & 5   \\
  0      & 9     \\
\end{bmatrix}\\
  O & \begin{bmatrix}
 0 & 1\\
 0 & 1 \\
 \end{bmatrix} & \begin{bmatrix}
  1 & 3   \\
  0      & 3     \\
\end{bmatrix}\\
  O & O & I
\end{bmatrix}=\Xi_{M_{ab\bullet c}}(w_1w_2)$$

\end{example}

\section{Determinants of minors}
In this section we look at the determinants of certain minors of the newly defined Parikh sequence matrices. We aim to obtain the generalization of the result that states that each minor of Parikh matrices has a nonnegative determinant~\cite{MateescuSSY01,MateescuSY04}. Our matrices contain two kinds of occurrence tracking entries, those terminated by a factor (auxiliary entries necessary for the mapping to be homomorphic) and those terminated by a bullet (the actual entries we are interested in tracking). The minors we consider here are made up of only the entries terminated by bullets. We will call those \emph{special minors}. Formally, for a word $w$ and a generalized subsequence $Q=q_1\bullet \cdots\bullet q_n$, the special minor of $\Xi_Q(w)$ is the one consisting of columns indexed $i_j$ and rows indexed $3(|Q|-1)+1-i_j$, for $j\in [1,n+1]$, where $i_1=1$, $i_j=|Q|-1+|q_1\cdots q_{j-1}|$, for each $j\in [2,n]$ and $i_{n+1}=3(|Q|-1)$, that is, the matrix
    $$M=\begin{bmatrix}
    1 & e^w_{q_1\bullet} & \cdots & f_{q_1\bullet...\bullet q_n}\\
    0 & 1 & \cdots & s^w_{\bullet q_2\bullet...\bullet q_n}\\
    \vdots & \vdots & \ddots & \vdots\\
    0 & 0 & \cdots & s^w_{\bullet q_n}\\
    0 & 0 & \cdots & 1
\end{bmatrix}= 
\begin{bmatrix}
    1 & f^w_{q_1} & \cdots & f^w_{q_1\bullet...\bullet q_n}\\
    0 & 1 & \cdots & f^w_{q_2\bullet...\bullet q_n}\\
    \vdots & \vdots & \ddots & \vdots\\
    0 & 0 & \cdots & f^w_{q_n}\\
    0 & 0 & \cdots & 1
\end{bmatrix}
$$
where the second equality follows from the definition of $s$, $e$ and $f$. The special minors of the Parikh sequence matrices induced by $q_1\bullet \cdots \bullet q_n$ correspond exactly to the original Parikh matrices induced by $q_1\cdots q_n$, whenever each factor $q_i$ is a single letter. We start by recalling the aforementioned result about Parikh matrices.

\begin{theorem}[\cite{MateescuSY04}, Theorem 6]\label{minor}
    The value of each minor of an arbitrary Parikh matrix is a nonnegative integer.
\end{theorem}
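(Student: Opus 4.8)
The plan is to combine the homomorphism property with elementary facts about totally nonnegative matrices. Call a matrix \emph{totally nonnegative} (TN) if every one of its minors, of every order, is a nonnegative number. An arbitrary Parikh matrix $A=\Psi_{M_k}(w)$ associated with a word $w=c_1c_2\cdots c_n$ factors, by the homomorphism property, as $A=\Psi_{M_k}(c_1)\Psi_{M_k}(c_2)\cdots\Psi_{M_k}(c_n)$. Each factor $\Psi_{M_k}(c_i)$ is a bidiagonal $0$-$1$ matrix: the identity with some additional $1$'s placed on the superdiagonal. Hence it suffices to prove (i) every nonnegative bidiagonal matrix is TN, and (ii) a product of TN matrices is TN; then $\Psi_{M_k}(w)$ is TN, i.e.\ all of its minors are nonnegative.

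For (ii) one uses the Cauchy--Binet formula: for index sets $R,C$ of equal size $p$, $\det\big((XY)[R\mid C]\big)=\sum_{K}\det\big(X[R\mid K]\big)\det\big(Y[K\mid C]\big)$, where $K$ ranges over all $p$-element index sets. When $X$ and $Y$ are TN, every summand is a product of two nonnegative minors and hence nonnegative, so $XY$ is TN; an induction on $n$ then gives the claim for $\Psi_{M_k}(w)$. For (i), take a bidiagonal matrix $B$ with unit diagonal and nonnegative superdiagonal, and a submatrix $B[R\mid C]$ with $R=\{r_1<\cdots<r_p\}$ and $C=\{c_1<\cdots<c_p\}$. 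Only permutations of the submatrix that send each row index $r_i$ to a column index equal to $r_i$ or $r_i+1$ can contribute a nonzero term, and a short analysis of the admissible index configurations (or the standard fact that nonnegative bidiagonal matrices are TN) shows that $\det\big(B[R\mid C]\big)$ either vanishes or equals a product of superdiagonal entries of $B$, so it is never negative. Thus $B$ is TN.

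Finally, the entries of any Parikh matrix are subword occurrence counts, hence nonnegative integers, so every minor of $\Psi_{M_k}(w)$ is an integer; together with the nonnegativity just established, every minor is a nonnegative integer, which is exactly the assertion of the theorem.

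I expect step (i) to be the part demanding the most care: one must pin down precisely which minors of a bidiagonal matrix vanish and check that none is negative, and this has to be done for minors of every order, not just the maximal ones, since a careless deletion of rows and columns can produce a submatrix with repeated rows or with a misaligned superdiagonal. An alternative route that bypasses this bookkeeping is to realize $\Psi_{M_k}(w)$ as the path-counting matrix of an acyclic planar network built from $w$ and invoke the Lindstr\"om--Gessel--Viennot lemma, by which each minor counts a family of vertex-disjoint directed paths and is therefore manifestly a nonnegative integer; the trade-off is the need to construct that network explicitly.
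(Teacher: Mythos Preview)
Your argument is correct. This theorem is not proved in the present paper but quoted from \cite{MateescuSY04}; the paper only remarks that the original proof ``relies on inverses of Parikh matrices.'' Your route is genuinely different: you factor $\Psi_{M_k}(w)$ as a product of the letter matrices $I+E_{q,q+1}$, observe that each such elementary bidiagonal matrix is totally nonnegative, and propagate total nonnegativity through the product via Cauchy--Binet. This places the result squarely inside the classical theory of TN matrices (and, as you note, the Lindstr\"om--Gessel--Viennot picture gives an alternative combinatorial proof), whereas the original argument exploits the explicit description of $[\Psi_{M_k}(w)]^{-1}$ as the alternate Parikh matrix of the mirror word. Your concern about step~(i) is lighter than you fear in this setting: each factor carries a \emph{single} nonzero off-diagonal entry, and one checks directly that every minor of $I+E_{q,q+1}$ equals $0$ or $1$ (if $R\neq C$ and $R\setminus C\not\subseteq\{q\}$ then some row of the submatrix vanishes, and symmetrically for columns; the only surviving nontrivial case $R=(C\setminus\{q{+}1\})\cup\{q\}$ yields the identity submatrix).
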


The proof of the theorem above relies on inverses of Parikh matrices and is not easily reproducible for our more general matrices, so we choose a different argument, which maps those special minors of a Parikh sequence matrix induced by $q_1\bullet \cdots \bullet q_n$ to Parikh matrices defined over an alphabet with $n$ letters. Note that the reduction is not trivially possible by replacing each $q_i$ with a letter $a_i$, as the following example demonstrates.

\begin{example}
Consider the generalized subsequence $q_1\bullet q_2\bullet q_1$ with $q_1=a$, $q_2=aba$ and the matrix tracking its occurrences in the word $aba$:
\[
\Xi_{a\bullet aba\bullet a}(aba)=
\begin{adjustbox}{max width=\linewidth}
$
\begin{bmatrix}
  I & \begin{bmatrix}
  e^{w}_{a\bullet} & e^{w}_{a\bullet a} & e^{w}_{a\bullet ab} & e^{w}_{a\bullet aba\bullet} \\
  0 & e^{w}_{a} & e^{w}_{ab} & e^{w}_{aba\bullet} \\
  0 & 0 & e^{w}_{b} & e^{w}_{ba\bullet} \\
  0 & 0 & 0 & e^{w}_{a\bullet} \\
\end{bmatrix} & \begin{bmatrix}
  f^{w}_{a\bullet a} & f^{w}_{a\bullet ab} & f^{w}_{a\bullet aba} & f^{w}_{a\bullet aba\bullet a} \\
  0 & f^{w}_{ab} & f^{w}_{aba} & f^{w}_{aba\bullet a} \\
  0 & 0 & f^{w}_{ba} & f^{w}_{ba\bullet a} \\
  0 & 0 & 0 & f^{w}_{a\bullet a} \\
\end{bmatrix}\\
  O & \begin{bmatrix}
 1 & e^{w}_{\bullet a} & e^{w}_{\bullet ab} & f^{w}_{ aba} \\ 
 0 & c^{w}_\varepsilon & c^{w}_{b} & s^{w}_{ba\bullet}\\
 0 & 0 & c^{w}_\varepsilon & s^{w}_{a\bullet}\\
 0 & 0 & 0 & 1\\
 \end{bmatrix} & \begin{bmatrix}
  s^{w}_{\bullet a} & s^{w}_{\bullet ab} & s^{w}_{\bullet aba\bullet} & s^{w}_{\bullet aba\bullet a} \\
  0 & s^{w}_{b} & s^{w}_{ba\bullet} & s^{w}_{ba\bullet a} \\
  0 & 0 & s^{w}_{a\bullet} & s^{w}_{a\bullet a} \\
  0 & 0 & 0 & s^{w}_{\bullet a} \\
\end{bmatrix}\\
  O & O & I
\end{bmatrix}$
\end{adjustbox}
\]
$$=\begin{bmatrix}
  I & \begin{bmatrix}
  2 & 1 & 0 & 0 \\
  0 & 1 & 0 & 1 \\
  0 & 0 & 0 & 1 \\
  0 & 0 & 0 & 2 \\
\end{bmatrix} & \begin{bmatrix}
  1 & 0 & 0 & 0 \\
  0 & 1 & 1 & 0 \\
  0 & 0 & 1 & 0 \\
  0 & 0 & 0 & 1 \\
\end{bmatrix}\\
O & \begin{bmatrix}
 1 & 2 & 0 & 1\\
 0 & 0 & 0 & 0 \\
 0 & 0 & 0 & 2 \\
 0 & 0 & 0 & 1 \\
 \end{bmatrix} & \begin{bmatrix}
  2 & 1 & 1 & 0 \\
  0 & 0 & 0 & 0 \\
  0 & 0 & 2 & 1 \\
  0 & 0 & 0 & 2 \\
\end{bmatrix}\\
  O & O & I
\end{bmatrix}$$

where the special minor is 
\[
\begin{bmatrix}
    1 & e^w_{a\bullet} & e^w_{a\bullet aba\bullet} & f^{w}_{a\bullet aba\bullet a}\\
    0 & 1 & f^w_{aba} & s^w_{\bullet aba\bullet a}\\
    0 & 0 & 1 & s^w_{\bullet a}\\
    0 & 0 & 0 & 1
\end{bmatrix}
=
\begin{bmatrix}
    1 & f^w_{a} & f^w_{a\bullet aba} & f^{w}_{a\bullet aba\bullet a}\\
    0 & 1 & f^w_{aba} & f^w_{aba\bullet a}\\
    0 & 0 & 1 & f^w_{a}\\
    0 & 0 & 0 & 1
\end{bmatrix}
=
\begin{bmatrix}
    1 & 2 & 0 & 0\\
    0 & 1 & 1 & 0\\
    0 & 0 & 1 & 2\\
    0 & 0 & 0 & 1
\end{bmatrix}.
\]

If we simply mapped $q_1$ to a letter $c$ and $q_2$ to a letter $d$, then we get $cdc$ as the word inducing the extended Parikh matrix mapping, but the matrix above is not equal to the extended Parikh matrix induced by the word $cdc$ of any word $w\in \{c,d\}^*$, because any word $w$ such that ${w\choose c}=2$ and ${w\choose d}=1$ will have either $cd$ or $dc$ occurring in it as a subword, that is ${w\choose cd}>0$ or ${w\choose dc}>0$, which is in contradiction with entries $(1,3)$ and $(2,4)$.

\end{example}

\begin{theorem}\label{proof minor non integer}
Let $M$ be the special minor of $\Xi_{Q}(w)$ for an arbitrary generalized subsequence $Q=q_1\bullet q_2\bullet ... \bullet q_n$, and an arbitrary word $w$. Then, the determinant of every minor of $M$ is a nonnegative integer.
\end{theorem}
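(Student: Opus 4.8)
The approach I would take is to reduce the special minor $M$ induced by $Q=q_1\bullet\cdots\bullet q_n$ to a genuine Parikh matrix over an $n$-letter alphabet, and then invoke Theorem~\ref{minor}. As the preceding example shows, the naive reduction (replace $q_i$ by a letter $a_i$ and keep the same word $w$) fails. The fix is to change the \emph{word} rather than merely the alphabet: I would construct, from $w$ and $Q$, a word $w'$ over $\Sigma'=\{a_1<\cdots<a_n\}$ such that the Parikh matrix $\Psi_{M_n}(w')$ \emph{equals} $M$. The point of the special minor (as recorded in the displayed equality in the definition of special minor) is that its $(i,j{+}1)$ entry is exactly $f^w_{q_i\bullet q_{i+1}\bullet\cdots\bullet q_j}$, i.e. the number of occurrences of the generalized subsequence $q_i\bullet\cdots\bullet q_j$ in $w$. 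So it suffices to produce $w'$ with $\binom{w'}{a_i a_{i+1}\cdots a_j}=f^w_{q_i\bullet\cdots\bullet q_j}$ for all $1\le i\le j\le n$.

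The key construction: scan $w$ left to right and record, at each position $p$, for which indices $i\in[1,n]$ the factor $q_i$ occurs in $w$ \emph{starting} at position $p$ (equivalently, ending at position $p+|q_i|-1$). Form $w'$ by emitting, as we pass position $p$, one letter $a_i$ for each such occurrence — and if several $q_i$ start at the same position, emit the corresponding $a_i$'s in increasing order of $i$. (Ties in \emph{ending} position versus \emph{starting} position need a consistent choice; I would anchor each $q_i$-occurrence to, say, its starting position, which is what makes the gap condition $i_{j+1}-i_j\ge|q_j|$ in Definition~\ref{def:gs} translate to strict increase of positions in $w'$.) An occurrence of $a_ia_{i+1}\cdots a_j$ as a subword in $w'$ then corresponds bijectively to a choice of occurrences of $q_i,q_{i+1},\dots,q_j$ in $w$ whose starting positions are such that consecutive ones are at distance $\ge|q_\ell|$ — which is precisely an occurrence of the generalized subsequence $q_i\bullet\cdots\bullet q_j$. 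Checking this bijection carefully (in particular that the tie-breaking by index value never creates a spurious $a_ia_{i+1}$ pair from two $q$'s anchored at the same position, since then $|q_i|=0$ is forbidden as the $q$'s are nonempty) gives $\binom{w'}{a_i\cdots a_j}=f^w_{q_i\bullet\cdots\bullet q_j}$, hence $\Psi_{M_n}(w')=M$ by the structure theorem for Parikh matrices (Theorem, \cite{MateescuSSY01}). Every minor of $M$ is then a minor of an honest Parikh matrix, and Theorem~\ref{minor} finishes the argument.

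The main obstacle I anticipate is getting the bijection in the previous paragraph exactly right at coincident anchor positions and at the boundary between the gap constraint ``$\ge|q_\ell|$'' and the subword constraint ``strictly increasing''. One has to verify that two occurrences of (not necessarily distinct) factors $q_i$ and $q_{i'}$ anchored at the \emph{same} position $p$ of $w$ cannot both be used as consecutive letters of a subword occurrence in the reduction unless the gap constraint genuinely allows it; since all $q_\ell$ are nonempty, $|q_\ell|\ge 1$, so consecutive matched factors must be anchored at \emph{distinct} positions of $w$, and the only danger is within a single anchor position — which is handled by noting that two letters emitted at the same anchor position are $a_i a_{i'}$ with the smaller index first, so they can participate together in a subword $a_ia_{i+1}\cdots$ only if they are not consecutive in value or there is an intervening factor at a later position; a short case check rules out the problematic configuration. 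The remaining steps — that $f^w_{q_i}$ really is the $i$-th superdiagonal entry, that the matrix $M$ is exactly $\Psi_{M_n}(w')$ rather than merely agreeing on the superdiagonal, which follows because the structure theorem pins down \emph{all} entries of a Parikh matrix from the word — are routine. One subtlety worth flagging explicitly: the reduction must not assume the $q_i$ are distinct or the sequence $q_1\bullet\cdots\bullet q_n$ is primitive in any sense; the construction above makes no such assumption, which is why it succeeds where replacing $q_i$ by $a_i$ in $w$ does not.
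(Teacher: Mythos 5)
Your high-level strategy is the same as the paper's: build a word $w'$ over $\{a_1<\dots<a_n\}$ with $\Psi_{a_1\cdots a_n}(w')=M$ and then invoke Theorem~\ref{minor}. However, your construction of $w'$ has a genuine gap: emitting one letter $a_i$ at the starting position of each occurrence of $q_i$ makes a subword occurrence of $a_ia_{i+1}$ in $w'$ correspond to a pair of occurrences whose \emph{starting positions} increase (strictly, or via your tie-break), and this is strictly weaker than the gap condition $i_{j+1}-i_j\ge |q_j|$ of Definition~\ref{def:gs} as soon as some $|q_j|\ge 2$; overlapping pairs are therefore over-counted. Concretely, take $Q=ab\bullet ba$ and $w=aba$: the occurrence of $ab$ (position $1$) overlaps the occurrence of $ba$ (position $2$), so $f^w_{ab\bullet ba}=0$ and the $(1,3)$-entry of the special minor is $0$, yet your $w'=a_1a_2$ gives ${w'\choose a_1a_2}=1$, so $\Psi_{a_1a_2}(w')\neq M$. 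The same failure already occurs at a single anchor position: if $q_i$ and $q_{i+1}$ both start at position $p$ they necessarily overlap (all $q_\ell$ are nonempty), but your increasing-index tie-break emits $a_ia_{i+1}$ and creates a spurious pair; the ``short case check'' you allude to does not rule this out (e.g.\ $q_1=a$, $q_2=ab$, $w=ab$ gives $f^w_{a\bullet ab}=0$ but ${w'\choose a_1a_2}=1$).

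This overlap issue is exactly where the paper spends its effort: at a common starting position it emits the letters in \emph{decreasing} order (with respect to an order in which prefixes are smaller), and, more importantly, it then runs an iterative, bubble-sort-like correction that swaps $a_{\ell-1}$ and $a_\ell$ in $w'$ whenever the corresponding occurrences of $q_{\ell-1}$ and $q_\ell$ overlap in $w$ but their letters appear in increasing order in $w'$; only after this reordering does one get ${w'\choose a_i\cdots a_j}=f^w_{q_i\bullet\cdots\bullet q_j}$ for all $i\le j$, hence $\Psi_{a_1\cdots a_n}(w')=M$. Without some device of this kind your claimed bijection, and with it the reduction to Theorem~\ref{minor}, is not established. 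The remaining ingredients of your plan (the structure theorem pinning down all entries of the Parikh matrix, no distinctness assumption on the $q_i$) are fine once the counting identity actually holds.
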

\begin{proof}
The proof proceeds by constructing a word $w'\in \{a_1,\dots,a_n\}^*$ such that $\Psi_{a_1\cdots a_n}(w')=M$. From there, we can apply Theorem~\ref{minor} and get that each minor of $M$ has nonegative determinant. 

First we define the order $\{q_1<q_2<...<q_n\}$ on the factors of $Q$, such that if $q_i$ is a prefix of $q_j$, then $q_i<q_j$. For factors $q_i,q_j$ incomparable by the prefix ordering, the relation between them can be set arbitrarily. In the word $w$, of length $k=|w|$, we find all occurrences of the factors $q_1,...,q_n$. Suppose that starting at an arbitrary position $i$ in $w$, we have occurrences of factors $q_{i_1} < q_{i_2} < \cdots < q_{i_s}$. We define the word $u_i = a_{i_s}\cdots a_{i_2}a_{i_1}$. Now we concatenate all the $u_i$ to get the word $w'=\prod_{i=1}^k u_k$.

For each $\ell$ running from $n$ to $2$, for each $i$ running from $|w|_{q_\ell}$ to $1$ and for each $j$ running from $|w|_{q_{\ell-1}}$ to $1$, if all of the following conditions apply
\begin{enumerate}
    \item the $i$th occurrence of $q_\ell$ in $w$ overlaps with the $j$th occurrence of $q_{\ell-1}$
    \item the $i$th occurrence of $a_\ell$ is at position $i'$ in $w'$, the $j$th occurrence of $a_{\ell-1}$ is at position $j'$ in $w'$, and $j'<i'$
\end{enumerate} 
then we exchange the positions of the latter two in the word $w'$, that is, we set $w'[i']=a_{\ell-1}$ and $w'[j']=a_\ell$. We run this process iteratively, as long as in the previous iteration there were any changes made, similarly to the bubble-sort algorithm.

The definition of $u_i$, moving $a_p$ in front of $a_r$ when they represent $q_r<q_p$ starting at the same position is done because the $\Psi_{a_1\cdots a_n}$ mapping counts subwords of the form $a_ia_{i+1}$, but does not count $a_{i+1}a_i$, which is in line with a pair $q_i$ and $q_{i+1}$ starting at the same position not contributing towards $f^w_{q_i\bullet q_{i+1}}$. The other rearrangements were necessary, so that if a pair of occurrences of $q_i$ and $q_{i+1}$ are overlapping, the corresponding occurrences of $a_i$ and $a_{i+1}$ will not be counted as an occurrence of subword $a_i a_{i+1}$. These two rules make sure that we do not introduce `extra' tracked occurrences during the construction of $w'$.

The word $w'$ obtained through this process satisfies the conditions
\begin{enumerate}
\item[C1]  
    For any $i$ with $1 \leq i \leq n$, we have $|w'|_{a_i}=f^{w}_{q_i}$.
\item[C2]  
    For any $i,k$, with $1 \leq i < n$ and $k\in [1,|w|_{q_{i+1}}]$, the number of occurrences of $a_i$ before the $k$th occurrence of $a_{i+1}$ in $w'$ is equal to the number of occurrences of $q_i$ that end before the start of the  $k$th occurrence of $q_{i+1}$ in $w$. 
\end{enumerate}

Condition C1. holds, because we map each occurrence of the factor $q_i$ in word $w$ to an occurrence of $a_i$ in $v$, one by one. 

As for condition C2, each individual swap of $a_i$ and $a_{i+1}$ only happens if the corresponding occurrences of $q_i$ and $q_{i+1}$ overlapped, not reducing the number of $a_ia_{i+1}$ occurrences in $w'$ below the number of occurrences of $q_i\bullet q_{i+1}$ in $w$. Such swaps also do not affect the number of occurrences $a_{i-1}a_i$ or $a_{i+1}a_{i+2}$ except in the cases when they would have to be swapped later anyway.

Finally, each occurrence of $q_i\bullet \cdots \bullet q_j$ in $w$ can be uniquely identified by indices $l_i,...,l_j$ where $l_k$ denotes that this occurrence of $q_i\bullet \cdots \bullet q_j$ uses the $l_k-th$ occurrence of $q_k$, for each $k\in [i,j]$, and we know that the $l_k$-th occurrence of $q_k$ does not overlap with the $l_{k+1}$-th occurrence of $q_{k+1}$, for any $k\in [i,j-1]$. This means that there exists an occurrence of the subword $a_i\cdots a_j$ in $w'$ using the $l_k$-th occurrence of $a_k$ in $w'$, for each $k\in [i,j]$, and vice versa. This means that for each $i,j$ with $1\leq i\leq j\leq n$ we have ${w'\choose a_i\cdots a_j}=|w|_{q_i\bullet \cdots \bullet q_j}$, so $\Psi_{a_1\cdots a_n}(w')=M$. From here, by Theorem~\ref{minor} we can conclude our statement, that each minor of $M$ has nonnegative determinant.

\qed\end{proof}

\section{Generalized subword histories}

In this section we will generalize the concept of subword histories and the main tool used in their analysis, that each subword history is equivalent to a linear one, by a method of eliminating products. First let us recall the notion of subword history defined by Mateescu et al.
\begin{definition}[\cite{MateescuSY04}]
Consider an alphabet $\Sigma$ and a word $w \in \Sigma^*$. A subword history over $\Sigma$ and its value in $w$ are defined recursively as follows. \begin{itemize}
\item Every $u\in \Sigma^*$ is a subword history over $\Sigma$, referred to as a monomial, and its value in $w$ equals ${w\choose u}$.
\item Assume ${SH}_1$ and ${SH}_2$ are subword histories with values $\alpha_1$ and $\alpha_2$, respectively. Then $-({SH}_1)$, $({SH}_1)+({SH}_2)$ and $({SH}_1)\times({SH}_2)$ are subword histories and their values are $-\alpha_1$, $\alpha_1+\alpha_2$ and $\alpha_1\alpha_2$ separately.
\end{itemize}
\end{definition}
In~\cite{MateescuSY04} it is proved that every subword history is equivalent to a linear subword history, i.e., one that does not use $\times$, only $+$, $-$ and monomials. Moreover, given a polynomial $p(u_1,...,u_n)$, a linear polynomial representing an equivalent subword history can be effectively constructed. We will show that subword histories defined through generalized subsequences hold the same properties.
\begin{definition}
Consider an alphabet $\Sigma$ and a word $w \in \Sigma^*$. A monomial generalized subword history over $\Sigma$ and its value in $w$ is defined as a generalized subsequence $Q=q_1\bullet q_2\bullet...\bullet q_n$, with $q_i \in \Sigma^*$, for all $i\in [1,n]$, and its value in $w$ equals $f^w_Q$.
Assume ${GSH}_1$ and ${GSH}_2$ are generalized subword histories with values $\alpha_1$ and $\alpha_2$, respectively. Then $-({GSH}_1)$, $({GSH}_1)+({GSH}_2)$ and $({GSH}_1)\times({GSH}_2)$ are also generalized subword histories, with values $-\alpha_1$, $\alpha_1+\alpha_2$ and $\alpha_1\alpha_2$, respectively. 
\end{definition}

For convenience, we also define the generalized subword histories $\emptyset$, whose value is $0$ in any $w$, $\varepsilon$ whose value is $1$ in any $w$ and define the following identities for all generalized subword histories $Q$ in any word $w$:
\[
\begin{array}{c} 
    \emptyset \bullet Q= Q\bullet \emptyset=\emptyset \\ 
    \varepsilon \bullet Q= Q\bullet \varepsilon=Q\\
    \emptyset + Q = Q + \emptyset = Q.
\end{array}
\] 
Furthermore, $\times$ and $+$ are associative, $+$ is commutative, and $\times$ is left and right distributive over $+$. For the sake of more succinct expression of the definitions below, we also let $\bullet$ be distributive over $+$.

Two generalized subword histories (GSH) are called \emph{equivalent} if their values are the same in any word $w$. A generalized subword history is \emph{linear} if it is obtained from monomials without using the operation $\times$.

For two monomial GSH $Q=q_1\bullet q_2\bullet ...\bullet q_x$ and $P=p_1\bullet p_2\bullet ...\bullet p_y$, we define the ground shuffle of $P$ and $Q$, denoted $P\shuffle Q$ as the set of all sequences $u_0\bullet q_1\bullet u_1...u_{x-1}\bullet q_x\bullet u_x$, where $u_0\bullet \cdots \bullet u_x=p_1\bullet \cdots \bullet p_y$ and any $u_i$ may contain several terms $p_j$ or be empty.

\begin{example}
    For $Q_1=ab\bullet c$, $Q_2=d$, their ground shuffle is $\sum_{x\in Q_1\dsqcup Q_2}x=ab\bullet c\bullet d+ab\bullet d \bullet c+d\bullet ab \bullet c$
\end{example}
    Two GSH $Q_1$ and $Q_2$ are disjoint if they can be defined over disjoint alphabets.

\begin{proposition}\label{lem:disjoint}
    Assume that $Q_1$ and $Q_2$ are disjoint GSH. Then, $Q_1\times Q_2$ and the ground shuffle $\sum_{x\in Q_1\dsqcup Q_2}x$ are equivalent.
\end{proposition}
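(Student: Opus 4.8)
The plan is to prove the equivalence by a direct counting argument: for an arbitrary word $w$, show that the value of $Q_1 \times Q_2$ in $w$ (namely $f^w_{Q_1} \cdot f^w_{Q_2}$) equals the value of $\sum_{x \in Q_1 \dsqcup Q_2} x$ in $w$ (namely $\sum_{x \in Q_1 \dsqcup Q_2} f^w_x$). First I would fix $w$ and unfold the left side as a sum over pairs: $f^w_{Q_1} \cdot f^w_{Q_2} = \sum_{(\alpha,\beta)} 1$, where $\alpha = (i_1,\dots,i_x)$ ranges over occurrences of $Q_1 = q_1\bullet \cdots \bullet q_x$ in $w$ and $\beta = (j_1,\dots,j_y)$ ranges over occurrences of $Q_2 = p_1 \bullet \cdots \bullet p_y$ in $w$. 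Each occurrence is a tuple of positions satisfying the spacing constraints from Definition~\ref{def:gs}.

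The key step is to set up a bijection between the set of such pairs $(\alpha,\beta)$ and the set of occurrences of elements of the ground shuffle $Q_1 \dsqcup Q_2$ in $w$. Given a pair $(\alpha,\beta)$, the blocks $q_1,\dots,q_x$ occupy intervals $[i_1,i_1+|q_1|-1],\dots$ and the blocks $p_1,\dots,p_y$ occupy intervals $[j_1,\dots]$; because $Q_1$ and $Q_2$ are disjoint (defined over disjoint alphabets) these intervals never overlap — an overlapping position would have to carry a letter of $Q_1$ and a letter of $Q_2$ simultaneously, impossible. Hence the $x+y$ blocks sit at pairwise disjoint, linearly ordered positions in $w$, and reading them left to right yields a single interleaving: the $p$-blocks that fall, in order, before $q_1$ form $u_0$, those between $q_1$ and $q_2$ form $u_1$, and so on, giving a sequence $u_0 \bullet q_1 \bullet u_1 \cdots u_{x-1}\bullet q_x \bullet u_x$ with $u_0 \bullet \cdots \bullet u_x = p_1 \bullet \cdots \bullet p_y$, i.e. an element $x$ of $Q_1 \dsqcup Q_2$, together with a concrete occurrence of $x$ in $w$ (the positions are exactly the $i$'s and $j$'s, whose relative order satisfies all the $\bullet$-spacing constraints automatically, since a $\bullet$ only requires ``at least'' the preceding block length and disjoint non-overlapping blocks respect this). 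Conversely, an occurrence in $w$ of any element $x = u_0 \bullet q_1 \bullet u_1 \cdots \bullet q_x \bullet u_x$ of the ground shuffle records positions for all the $q$'s (giving an occurrence $\alpha$ of $Q_1$) and, grouping the $u_i$ back together, positions for all the $p$'s in their original order (giving an occurrence $\beta$ of $Q_2$); one checks these two maps are mutually inverse. Summing the trivial identity $1 = 1$ across this bijection gives $f^w_{Q_1}\cdot f^w_{Q_2} = \sum_{x \in Q_1 \dsqcup Q_2} f^w_x$, which is the claimed equivalence.

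The main obstacle, and the point requiring the most care, is verifying that distinct shuffle terms $x \in Q_1 \dsqcup Q_2$, or distinct occurrences of the same term, never produce the same pair $(\alpha,\beta)$ and conversely that the decomposition into ``$p$-blocks before $q_1$, between $q_1$ and $q_2$, \dots'' is well-defined and order-preserving — this is exactly where disjointness of the alphabets is essential, since without it a letter shared between $Q_1$ and $Q_2$ could be claimed by overlapping blocks, the blocks could no longer be linearly separated, and the interleaving would be ambiguous (the worked example $q_1\bullet q_2 \bullet q_1$ earlier in the paper illustrates precisely this kind of overlap pathology). A secondary subtlety is bookkeeping the ``$\bullet$ is distributive over $+$'' convention so that the formal sum $\sum_{x\in Q_1\dsqcup Q_2} x$ is interpreted as a GSH whose value in $w$ is the sum of the values $f^w_x$; this is immediate from the stated conventions but should be invoked explicitly. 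Once disjointness is used to guarantee block separation, the bijection and the final summation are routine.
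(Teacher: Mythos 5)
Your proof is correct and follows essentially the same route as the paper's: the paper's (one-sentence) argument is exactly the bijection you spell out, namely that disjointness of the alphabets forbids any overlap between $Q_1$-blocks and $Q_2$-blocks, so each pair of occurrences determines a unique ground-shuffle term and occurrence thereof, and conversely. Your version merely makes explicit the well-definedness and injectivity checks that the paper leaves implicit.
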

\begin{proof}
    The argument is straightforward: each pair of occurrences of $Q_1$ and $Q_2$ defines an occurrence of a term in their ground shuffle, since factors of $Q_1$ cannot overlap with factors of $Q_2$, and the other direction is immediate.
\qed\end{proof}
If GSH $Q_1$ and $Q_2$ are not disjoint then we get
$$GSH(w,Q_1\times Q_2)\geq GSH(w,\sum_{x\in Q_1\dsqcup Q_2}x)$$
because each occurrence of a term in $Q_1\shuffle Q_2$ trivially defines a pair of occurrences of $Q_1$ and $Q_2$, but some occurrence of $Q_1$ might overlap with an occurrence of $Q_2$, so that pair would not correspond to any ground shuffle term occurrence.

\subsection{Reduction to linear generalized subword histories}
Now we move on to show how to construct a linear GSH equivalent to a given GSH that uses the $\times$ operation. Let us denote by $\mathbf{GSH}_\Sigma$ the set of all GSH over $\Sigma$. For $\Sigma$, we consider the primed version $\Sigma'=\{a'|a\in \Sigma\}$. 
Let $h:(\Sigma\cup\Sigma')^*\rightarrow \Sigma^*$ be the morphism defined by $h(a)=h(a')=a$. By a slight abuse of the notation we extend it to the domain of monomial GSH as $h(Q)=h(q_1)\bullet \cdots \bullet h(q_n)$ for any $Q=q_1\bullet \cdots \bullet q_n$, where $q_i \in (\Sigma\cup\Sigma')^*$, and then further as $h(a_1Q_1+a_2Q_2)= a_1h(Q_1)+a_2h(Q_2)$, where $a_1,a_2\in \mathbb{Z}$ and $Q_1,Q_2\in \mathbf{GSH}_{(\Sigma\cup\Sigma')}$. Let the morphism $g:\Sigma^*\rightarrow \Sigma'^*$ be defined as $g(a)=a'$ for all $a\in \Sigma$ and extended to GSH similarly to $h$.

For the generalized subsequences $Q_1=p_1\bullet \cdots \bullet p_m$ and $Q_2=q_1\bullet \cdots \bullet q_n$, both occurring in $w$, a pair of occurrences of $Q_1$ and $Q_2$ is \emph{interleaved} if for each $i\in [1,m-1]$ there is a $j\in [1,n]$ such that $q_j$ overlaps with both $q_i$ and $q_{i+1}$ and for each $i\in [1,n-1]$ there is a $j\in [1,m]$ such that $p_j$ overlaps with both $q_i$ and $q_{i+1}$.


Now we introduce a function to reduce the $\bullet$ of two monomial GSH defined over $\Sigma$ and $\Sigma'$, respectively. Suppose that the GSH given by the generalized subsequence $Q_1=p_1\bullet \cdots \bullet p_i$ is over $\Sigma$ and $Q_2=q_1\bullet \cdots \bullet q_j$ is over $\Sigma'$. We define the function $red$, such that $red(p_1\bullet...\bullet p_i \bullet q_1\bullet... \bullet q_j)=a_1v_1+a_2v_2+\dots +a_kv_k$, where for all $d\in [1,k]$, we have $a_d\in \mathbb{N}$ and $v_d\in \Sigma^+$ are factors satisfying the conditions:
        \begin{enumerate} 
            \item $v_d$ contains interleaved occurrences of $Q_1$ and $h(Q_2)$ starting at the first position of $v_d$ and ending at the last position of $v_d$
            \item $a_d$ is the number of distinct interleaved occurrences of $Q_1$ and $h(Q_2)$ in $v_d$ starting at the first position of $v_d$ and ending at the last position of $v_d$.
        \end{enumerate}
        If such $v_d$ exist, then we say that $Q_1\bullet Q_2$ is reducible. If $v_d$ does not exist, then $red(Q_1\bullet Q_2)=\emptyset$. Note that $|v_d|<|p_1|+\cdots +|p_i|+|q_1|+\cdots+|q_j|$, due to the conditions on interleaved occurrences stretching across $v_d$, therefore $|v_d|_{p_1\bullet...\bullet p_i \bullet h(q_1\bullet... \bullet q_j)} =0$.

\begin{example}
    Assume that $Q_1= abc\bullet c$ and $ Q_2=a'\bullet c'$. Then $red(Q_1\bullet Q_2)=\emptyset$. However, if $Q_3=abc\bullet de$, $Q_4=a'\bullet c'd'$, then $red(Q_3\bullet Q_4)=a_1v_1$ with $a_1=1$ and $v_1=abcde$.
\end{example}

Now we can construct for the general case of GSH over $\Sigma\cup \Sigma'$ the reduction function $R:\mathbf{GSH}_{(\Sigma\cup \Sigma')^*}\rightarrow \mathbf{GSH}_{\Sigma^*}$ inductively as follows. 
\begin{enumerate}
    \item For GSH $P$ over $\Sigma$ and $Q$ over $\Sigma'$, let $R(P)=P$ and $R(Q)=h(Q)$.
    \item For monomial GSH $P$ over $\Sigma$ and $Q$ over $\Sigma'$ let $R(P\bullet Q)=P\bullet h(Q) + red(P\bullet Q)$.
    \item For monomial GSH $P_1,\dots, P_m$ over $\Sigma$ and $Q_1,\dots, Q_m$ over $\Sigma'$, with $m>1$:
    \[
    R(P_1\bullet Q_1\bullet P_2\bullet Q_2 \bullet \cdots \bullet P_m\bullet Q_m)= R(P_1\bullet Q_1)\bullet R(P_2\bullet Q_2\bullet \cdots \bullet P_m\bullet Q_m).
    \]
    \item For GSH $Q$ over $\Sigma\cup \Sigma'$ and $z\in \mathbb{Z}$, let $R(zQ)=zR(Q)$.
\end{enumerate}

\begin{theorem}
    For arbitrary monomial GSH $Q_1$ and $Q_2$, the GSH $Q_1\times Q_2$ is equivalent to the linear GSH $$\sum_{Q\in Q_1\shuffle g(Q_2)} R(Q).$$
\end{theorem}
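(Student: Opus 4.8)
The plan is to show that the value of $Q_1 \times Q_2$ in any word $w$ equals the sum of the values of the linear GSH $\sum_{Q \in Q_1 \shuffle g(Q_2)} R(Q)$ in $w$, by setting up a bijection between, on one side, pairs of occurrences of $Q_1$ and $Q_2$ in $w$, and on the other side, occurrences of the terms appearing in $\sum_{Q \in Q_1 \shuffle g(Q_2)} R(Q)$ (counted with their integer multiplicities). The key idea is that a pair of occurrences of $Q_1 = p_1 \bullet \cdots \bullet p_m$ and $Q_2 = q_1 \bullet \cdots \bullet q_n$ in $w$ determines a relative arrangement of the blocks $p_i$ and the blocks $q_j$ along $w$: either two blocks are disjoint (one entirely precedes the other) or they overlap. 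The disjointness structure picks out a term of the ground shuffle $Q_1 \shuffle g(Q_2)$, while the maximal runs of mutually overlapping blocks are exactly the interleaved sub-configurations that the function $red$ is designed to collapse into a single factor. So the two reduction cases in the definition of $R(P \bullet Q)$ — namely $P \bullet h(Q)$ for the non-overlapping case and $red(P\bullet Q)$ for the overlapping case — precisely enumerate the two possibilities at each junction.

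First I would prove the base case $m = 1$ of the inductive definition of $R$: for monomial $P$ over $\Sigma$ and $Q$ over $\Sigma'$, show that $f^w_{P\bullet Q} + \bigl(\text{value of } red(P\bullet Q) \text{ in } w\bigr)$ counts exactly the pairs of occurrences of $P$ and $h(Q)$ in $w$ where $P$ starts weakly before $h(Q)$ — splitting on whether the last block of $P$ overlaps the first block of $h(Q)$ or not; when they do not overlap the pair is an occurrence of the generalized subsequence $P \bullet h(Q)$, and when they do overlap the overlapping portion together with the interleaving conditions forces the combined span to be a factor $v_d$ counted by $red$, using the observation already noted in the text that $|v_d| < |P| + |Q|$. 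Then I would handle the general $m$: by induction on $m$, a pair of occurrences of $Q_1$ and $Q_2$ decomposes at the first point where the two subsequences "separate" into a prefix handled by $R(P_1 \bullet Q_1)$ and a suffix handled recursively, matching clause (3) of the definition of $R$; here one must be careful that the recursion is over the interleaved structure, not the literal block count, which is why the ground shuffle $Q_1 \shuffle g(Q_2)$ is taken first (distributing the $q_j$ blocks of $g(Q_2)$ among the gaps of $Q_1$) and only then is $R$ applied to collapse the remaining forced overlaps. Finally, summing over all ground-shuffle terms $Q$ and applying $R$ termwise, together with $R$ being a $\mathbb{Z}$-linear extension and $\bullet$ distributing over $+$, yields the claimed equality; linearity of the value map then shows the right-hand side is linear (no $\times$).

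The main obstacle I anticipate is the bookkeeping of multiplicities and the avoidance of double-counting: a single pair of occurrences of $Q_1, Q_2$ must correspond to exactly one occurrence of exactly one term on the right-hand side. The ground shuffle $Q_1 \shuffle g(Q_2)$ is generally not a sum of pairwise "non-overlapping" configurations — as the text itself remarks, $\mathrm{GSH}(w, Q_1 \times Q_2) \geq \mathrm{GSH}(w, \sum_{x} x)$ with equality only in the disjoint case — so the role of $red$ is to recover exactly the deficit, and I must verify that $red$ neither undercounts (it should capture every overlapping configuration, with the right integer coefficient $a_d$) nor overcounts (each $v_d$ should arise from a unique shuffle term, so that applying $R$ termwise across the whole shuffle sum introduces no collisions). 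Establishing this amounts to checking that the decomposition of an arbitrary occurrence pair into its (disjointness pattern, interleaved blocks) data is a bijection onto the indexing set of the right-hand sum; once that combinatorial lemma is in place, the theorem follows by the inductive unwinding of $R$ described above.
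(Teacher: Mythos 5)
Your proposal is correct and follows essentially the same route as the paper: a one-to-one correspondence between pairs of occurrences of $Q_1,Q_2$ and occurrences of either ground shuffle terms (non-interleaved case) or reduced terms produced by $red$ (interleaved case), with $R$ enumerating both possibilities at each junction. Your inductive unwinding over $m$ and the explicit attention to multiplicities and double-counting merely flesh out details the paper leaves as ``straightforward from the construction,'' so this is the same argument rather than a different one.
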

\begin{proof}
    There is a one-to-one correspondence between pairs of occurrences of $Q_1,Q_2$ and the set of occurrences of (1) terms of the ground shuffle of $Q_1$ and $Q_2$ and (2) reduced terms of the ground shuffle, where the $red$ function applied to some factor. The former is the case when the pair of occurrences of $Q_1$ and $Q_2$ has no interleaved parts, and the latter otherwise. The correspondence is straightforward from the construction of the reduction functions. The value of $\sum_{Q\in Q_1\shuffle g(Q_2)} R(Q)$ is clearly greater or equal to the value of $Q_1\times Q_2$, due to rule 2. in the definition of $R$. Conversely, each term in $\sum_{Q\in Q_1\shuffle g(Q_2)} R(Q)$ that does not occur in the ground shuffle, is obtained as $P_1\bullet red(Q'_1\bullet Q'_2)\bullet P_2$ for some ground shuffle term containing the subsequence $Q'_1\bullet Q'_2$ and corresponds to counting occurrences pairs of $Q_1$ and $Q_2$ where the $Q'_1$ part of $Q_1$ is interleaved with the $Q'_2$ part of $Q_2$.

\qed\end{proof}

By applying the laws of distributivity of $\times$ over $+$ and the reduction above, we can obtain an equivalent linear GSH from any starting GSH. An important consequence of the ability to linearize GSH is that, by a simple proof just like in the case of the original subword histories, two linear GSH are equivalent if and only if they are identical up to a reordering of their terms. This means that equivalence of GSH is decidable by a simple, albeit high time complexity algorithm.

\section{Conclusion}
We proposed a generalization of Parikh matrices that are able to track factors and subsequences with gaps. For these new mappings, We also obtained the generalized versions of the most fundamental properties applied in the study of Parikh matrices: that they are homomorphisms and that significant minors of the matrices have nonnegative determinant. The latter property gave rise to a well-investigated direction of research into equations and inequalities involving subword histories. To be able to continue the study of subsequences with gaps in that direction, we also showed that subword histories defined with those generalized subsequences can also be linearized, just as was the case with the subword histories based on scattered subword occurrences. One open road ahead that we think is worth investigating, is to describe the class of languages that can be defined through generalized subword history conditions.

Another meaningful way to expand on this research is to consider the `ultimate' generalizations of subwords: subsequences with gap constraints provided as length (the maximum allowed distance individually defined between each consecutive pair of letters of the subsequence when matched in the containing word) or even more generally subsequences with regular gap constraints (for exciting recent results, see~\cite{ManeaRS24}), that is, when matching the subsequence, the gap between two consecutive letters has to be a word in a given regular language. The formalism we developed seems flexible enough to handle length constraints by minor modification, but we do not yet know whether the regular constraints are implementable through such matrix mappings.


\bibliographystyle{splncs04}
\bibliography{Parikh_factor_matrix}

\end{document}